\documentclass{nseJournal}
\usepackage{xcolor}
\usepackage{placeins}
\usepackage{amsthm, amsmath}
	\newtheorem{assumption}{Assumption}
	\newtheorem{constraint}{Constraint}
	
	\newtheorem{lemma}{Lemma}
	
	\newtheorem{theorem}{Theorem}
	\newtheorem{conjecture}{Conjecture}

\usepackage{lineno}

\begin{document}

\title{Parallel Approximate Ideal Restriction Multigrid for Solving the S$_N$ Transport Equations} 
\addAuthor{\correspondingAuthor{Joshua Hanophy}}{a}
\correspondingEmail{jthano@tamu.edu}
\addAuthor{Ben S. Southworth}{b}
\addAuthor{Ruipeng Li}{c}
\addAuthor{Jim Morel}{a}
\addAuthor{Tom Manteuffel}{b}

\addAffiliation{a}{Texas A\&M University, Department of Nuclear Engineering\\ 210 Animal Industries, College Station, TX 77843}
\addAffiliation{b}{University of Colorado at Boulder, Department of Applied Mathematics\\ Engineering Center ECOT 225, 526 UCB, Boulder, CO 80309}
\addAffiliation{c}{Lawrence Livermore National Lab\\ 7000 East Avenue, Livermore, CA 94550 }

\addKeyword{transport}
\addKeyword{multigrid}
\addKeyword{sweep}

\titlePage

\begin{abstract}
The computational kernel in solving the $S_N$ transport equations is the parallel sweep, which corresponds to
directly inverting a block lower triangular linear system that arises in discretizations of the linear transport equation. 
Existing parallel sweep algorithms are fairly efficient on structured grids, but still have polynomial scaling, $P^{1/d}$ for
$d$ dimensions and $P$ processors. Moreover, an efficient scalable parallel sweep algorithm for use on general
unstructured meshes remains elusive. Recently, a classical algebraic multigrid (AMG) method based on approximate
ideal restriction (AIR) was developed for nonsymmetric matrices and shown to be an effective solver for linear transport. 
Motivated by the superior scalability of AMG methods (logarithmic in $P$) as well as the simplicity with which AMG
methods can be used in most situations, including on arbitrary unstructured meshes, this paper investigates the use
of parallel AIR (pAIR) for solving the $S_N$ transport equations with source iteration in place of parallel sweeps. 
Results presented in this paper show that pAIR is a robust and scalable solver. Although sweeps are still shown to
be much faster than pAIR on a structured mesh of a unit cube, pAIR is shown to perform similarly on both a structured
and unstructured mesh, and offers a new, simple, black box alternative to parallel transport sweeps.
\end{abstract}

\section{Introduction}
Consider the monoenergetic transport equation with spatially dependent isotropic cross sections,
$\sigma_s$ and $\sigma_t$, and an isotropic source, $q$,
\begin{equation}
\label{eq:mono_transport}
\vec{\Omega} \cdot \nabla \psi \left(  \vec{r}, \vec{\Omega} \right) + \sigma_t \psi \left(  \vec{r}, \vec{\Omega} \right) = 
\frac{\sigma_s}{4\pi}\phi \left(  \vec{r}  \right) + \frac{q}{4\pi},
\end{equation}
where $\psi$ is the angular flux and $\phi$ is the scalar flux equal to the integral of the angular flux over all directions, $\phi = \int_{4\pi}\psi d\Omega$.
The $S_N$ equations are derived by first selecting a quadrature set to approximate the integral over all directions and then defining the scalar flux as
\begin{align}
\begin{split}
\phi \left(  \vec{r}  \right)  &= \sum_{m=1}^{M} w_m \psi_m \left(  \vec{r}  \right), \\
\psi_m \left(  \vec{r}  \right) & =  \psi \left(  \vec{r}, \vec{\Omega}_m \right).
\end{split}
\label{eq:angular_quad}
\end{align}
Evaluating Equation \ref{eq:mono_transport} at the quadrature points results in M coupled equations that are discrete in direction,
\begin{align}
\vec{\Omega} \cdot \nabla \psi_m \left(  \vec{r} \right) + \sigma_t \psi_m \left(  \vec{r} \right) = 
\frac{\sigma_s}{4\pi}\phi \left(  \vec{r}  \right) + \frac{q}{4\pi}. \label{eq:discrete_in_angle_transport}
\end{align}
In principle, Equation \eqref{eq:discrete_in_angle_transport} could be discretized in space and solved for all
$\{\psi_m\}$; however, the resulting linear system would be have $M/h^3$ degrees of freedom
(DOFs), where $M$ is the total number of directions and $h$ the mesh spacing, which is too large for practical calculations.
A common technique to avoid the large coupled system is to solve these equations iteratively, which is referred to as
``source iteration.'' Source iteration involves lagging the scalar flux, first updating $\{\psi_m\}$ for the $M$ independent
angles and proceeding to update the scalar flux, that ism solve
\begin{align}
\vec{\Omega}_m \cdot \nabla \psi_m^{n+1} \left(  \vec{r} \right) + \sigma_t \psi_m^{n+1} \left(  \vec{r} \right) = 
\frac{\sigma_s}{4\pi}\phi^{n} \left(  \vec{r}  \right) + \frac{q}{4\pi} \label{eq:sn_transport}
\end{align}
for $m=1,...,M$, and form $\phi^{n+1}$ based on $\{\psi_m^{n+1}\}$ from \eqref{eq:angular_quad},
where n is the iteration index.

The upwind discontinuous Galerkin (DG) discretization is investigated exclusively in this paper \cite{PSMIR_1974___S4_A8_0}.
The upwind DG weak formulation for the steady state transport equation for each direction $\Omega$ takes the weak form
\begin{equation}
\label{eq:weak_form_transport}
\begin{aligned}
&-\sum_{T \in T_h }\left( u_h^{n+1},\Omega \cdot \nabla v_h \right) -\sum_{F \in F^i_h}\left\langle ^-u^{n+1}_h,\Omega \cdot \left[ v_h \textbf{n}\right] \right\rangle_F +\sum_{T \in T_h }\left(\sigma_t  u^{n+1}_h, v_h \right)+
\\
&\hspace{20ex} \left\langle u^{n+1}_h,v_h\Omega \cdot \textbf{n} \right\rangle _{\Gamma^+} = -\left\langle g, v_h \Omega \cdot \textbf{n} \right\rangle _{\Gamma ^-} + \sum_{T \in T_h }\left(q^n_{total}, v_h \right).
\end{aligned}
\end{equation}
Here, $n$ and $n+1$ are iteration indices, $(\cdot,\cdot)_T$ are inner products over the volume, and $\langle\cdot,\cdot\rangle_T$
are inner products over the faces. The jump term is given by $\left[ v_h \textbf{n} \right] = v_h \textbf{n}^+ - v_h\textbf{n} ^-$,
where $+$ and $-$ superscripts refer to the upwind and downwind faces respectively, $q_{total}$ is the total source term, including a volumetric source and scattering source, and with the scalar flux lagged as shown in Equation \ref{eq:sn_transport}. Last,
$\Gamma^-$ is the inflow boundary, $\Gamma^+$ is the outflow boundary, and $g$ is a boundary source on the inflow boundary. 

For simplicity, only isoparametric multilinear quadrilateral and hexahedral elements are considered in this paper.
However, the multigrid solver used here has shown to be effective on higher order elements \cite{AIR1,AIR2} and
curvilinear meshes \cite{19sweep} as well, and results presented here are expected to extend to arbitrary mesh and
element order.

\subsection{Solution of the Transport Equation by Sweeping}
\label{sec:sn_by_parallel_sweeps}
Upwind discretizations (such as upwind DG) are typically used to discretize Equation \ref{eq:discrete_in_angle_transport}
for each $m$, because the resulting matrix is logically block lower triangular. A forward solve, or ``sweep'' in transport
literature, can then be used to directly invert the matrix. Although a forward solve is a sequential process, parallel algorithms
have been developed to perform a forward solve for the sparse discretizations resulting from a fixed angular flux direction.
One popular method for parallel sweeping is the KBA method proposed by Koch, Baker, and Alcouffe \cite{KBA}.
For efficiency, this method relies on a specific decomposition of the spatial cells among the processors being used to
solve the problem, as well as an ordering of cells for a given angle such that the matrix is lower triangular in that ordering.
On a structured mesh, the sweep ordering of cells in a uniform mesh is trivial to determine, but on an unstructured mesh
determining a good ordering is a difficult problem (for example, see \cite{Liu_sweeps_unstructured}). 

The parallel sweeping algorithm is not particularly efficient for one direction. However, parallel efficiency of such
algorithms is improved by solving multiple directions at a time. When multiple directions are solved simultaneously, a
scheduling algorithm is generally required to handle the situation where solution fronts for different angular flux directions collide.
This situation was analyzed and a provably optimal scheduling algorithm for uniform meshes developed in \cite{bailey1,adams1}.
The number of communication stages for such algorithms on a uniform mesh is given by $\mathcal{O}(M + dP^{1/d})$,
for $M$ angles, $P$ processors, and a $d$-dimensional problem \cite{bailey1}. For a large number of angles $M$, the
$dP^{1/d}$ term can be masked by $M$. However, for problems with a small to moderate number of angles, or a very
large number of processors, $dP^{1/d}$ is suboptimal parallel scaling. Implementing an efficient parallel sweep on
unstructured meshes is more complicated and remains an area of active research. A review of developments towards
efficient parallel sweeps on unstructured meshes is given in \cite{Liu_sweeps_unstructured},  but it should be noted
that asymptotic complexity is at very best equivalent to that on uniform meshes.

In general, solution of the transport equation requires extensive computing time, up to 90\% of wallclock time
in large multiphysics simulations. As increasing numbers of processors are available for computation, it is desirable to
have an alternative to parallel sweeps with parallel complexity logarithmic in $P$ rather than polynomial in $P$, as well
as black-box sweeping capabilities that are robust and independent of meshing and discretization.

\subsection{Use of Algebraic Multigrid (AMG)} \label{sec:AIR_introduction}

This paper considers the use of an algebraic multigrid (AMG) method in place of traditional parallel sweeping,
motivated by two factors. First, AMG methods have better optimal scalability than parallel sweeping algorithms, and second,
AMG methods can be easily used in a black box manner and do not require specialized domain partitioning.
AMG methods are widely used to solve linear systems arising from the discretization of elliptic and parabolic partial differential equations.
Ideally, AMG scales linearly with the problem size in floating point operations, and communication cost for parallel
AMG scale logarithmically with the number of processors \cite{falgout1}.
In an optimal setting, the time to solution of parallel AMG scales like $\log(P)$, for $P$ processors.
In practice, factors such as growth of the convergence factor, growth of problem size, and coarse-grid fill in can lead to scaling $\log^m(P)$ for $m\geq 1$.
Nevertheless, for reasonable $m$, AMG scalability is still asymptotically better than the scaling of transport sweeps, $dP^{1/d}$.
Recently, a classical AMG method based on an approximate ideal restriction (AIR) was developed for nonsymmetric matrices.
AIR has shown to be effective for solving linear systems arising from upwind discontinuous Galerkin (DG) finite element
discretizations of advection-diffusion problems, including the hyperbolic limit of pure advection \cite{AIR1,AIR2}.
A parallel version of AIR (pAIR) is now available in the \textit{hypre} library, and this paper investigates the performance of
pAIR for solving the S$_N$ transport equations.

The AIR algorithm and supporting theory are developed in \cite{AIR1} and \cite{AIR2}. Briefly,
AIR is a Petrov-Galerkin AMG method based on the construction of an approximation to a certain ideal restriction operator.
The ideal restriction operator exactly removes error modes from the coarse grid that are not in the range of interpolation, or,
equivalently, provides an exact correction at coarse-grid points. 
Such a restriction operator separates the coarse-grid problem from the fine-grid problem, so solving $A\mathbf{x} = \mathbf{b}$
is reduced to solving two smaller problems. AIR approximates the ideal restriction operator and is thus an approximate
reduction-based AMG method. AIR is coupled with a simple interpolation, where coarse-grid points are interpolated by
value to the fine grid, and several relaxations over F-points on the fine grid then complement the correction to C-points
provided by AIR. Two methods were developed in \cite{AIR1,AIR2} for computing $R$, the local approximate
ideal restriction ($\ell$AIR) and a method based on the use of a finite Neumann expansion ($n$AIR), both of which
are examined in this paper. For more details on AIR, please see \cite{AIR1,AIR2}.

\subsection{Sweeping with pAIR}
\label{sec:sweeping_with_pAIR}

This paper studies the performance of pAIR in S$_N$ transport simulations, replacing a traditional sweep in source
iteration with a pAIR solve for each angle. Section \ref{sec:perf} examines the performance of pAIR on different
problems and meshes and compares different parallel relaxation routines. One unique aspect of pAIR over a
traditional sweep is that the user has flexibility in how accurately the system is solved. In Section \ref{sec:accuracy},
a short algebraic analysis followed by numerical tests indicate that only a few pAIR iterations are necessary to
reach discretization accuracy. Weak scaling is then discussed in Section \ref{sec:weak_scaling_parallel_in_space},
and angular parallelism introduced in Section \ref{sec:multiple_meshes}, where multiple MPI groups store an entire
 copy of the spatial mesh, and each group solves a subset of the angles,
which reduces the total time to solution by 30--50\%. Finally, weak scaling and a comparison with traditional
sweeps is presented in Section \ref{sec:comp_with_sweeps}

pAIR has been implemented in the publicly available \textit{hypre} library (\url{https://github.com/hypre-space/hypre})
\cite{Falgout:2002vu}, which is used exclusively in this paper. 
The program created to solve the $S_N$ equations relies on the deal.II finite element library \cite{dealII90}.
For domain decomposition and manipulation of a fully distributed mesh, deal.II relies on the p4est library \cite{BursteddeWilcoxGhattas11} and
for fully distributed matrices and vectors, deal.II is used with the Trilinos library \cite{Trilinos_general}. The Trilinos ifpack package \cite{ifpack-guide} is used to interface with \textit{hypre}.
Note that the linear systems investigated in this paper have a natural block-diagonal structure.
For the two-dimensional quadrilateral and three-dimensional hexahedral multi-linear elements,
block sizes are four and eight respectively. In principle the block structure can be accounted for directly by 
the pAIR algorithm (as discussed in \cite{AIR1} and \cite{AIR2}). However, the parallel matrix structures used
in the software do not contain block information, so matrices are instead scaled by their block inverse, resulting
in a logically lower triangular matrix with a unit diagonal.

\section{pAIR Performance for Transport}\label{sec:perf}

AIR has proven an effective solver for steady state advection through a medium with discontinuities in the
total cross section, using both a uniform mesh and an unstructured mesh \cite{AIR2}. 
In the works developing the AIR algorithm and supporting theory, simple Jacobi relaxation was shown in practice
to be effective and some theoretical basis for the effectiveness was provided.
For all the problems investigated as part of this paper, Jacobi relaxation is again found to be a robust relaxation scheme.
However, the logically lower triangular nature of transport discretizations investigated in this
paper allows for potentially more effective relaxation schemes to be implemented, specifically, an on-processor
forward solve can be used as a relaxation method, which exactly inverts the principle submatrix stored on
processor, 

In Section \ref{sec:perf:mesh}, the performance of pAIR is compared and demonstrated on a variety of meshes
and problem types for the two types of relaxation. Then, pAIR is  compared with only using on-processor
relaxation as a preconditioner for GMRES (without AMG) in Section \ref{sec:perf:jacobi}.
In practice a transport sweep is sometimes replaced by an on-processor solve either due to ease of implementation
or in an attempt to reduce the total time to solution, particularly on unstructured meshes, where sweeps can be
very time consuming. Here, we look at the regimes of total opacity where pAIR is better than an on-processor solve.

\subsection{Assessment of pAIR with Different Meshes and Cross-Section Characteristics}\label{sec:perf:mesh}

In two dimensions, pAIR is tested on four different meshes, a uniform structured mesh and the three
meshes shown in Figure \ref{fig:mesh_schematics}. The unstructured mesh was generated by meshing the
surface structure shown in Figure \ref{fig:domain_unstruct_mesh} using the Gmsh mesh generator program.
The zmesh shown in Figure \ref{fig:2D_zmesh} is logically a structured mesh and meant primarily to test high
aspect ratio elements \cite{kershaw1}.
The meshes are distributed over 540 processors and refined to approximately $70,000$ spatial DOFs per
processor. A sweep is then performed by solving for 32 directions in an angular quadrature set sequentially.
This involves constructing a global FEM matrix for a single direction, solving the linear system, then clearing the matrix and repeating these steps for the next direction.
For the tests presented in this section, the linear system is solved to machine precision.

\begin{figure}[!htbp]
	\centering
	\begin{subfigure}[b]{0.25\textwidth}
		\centering
		\includegraphics[width=\textwidth]{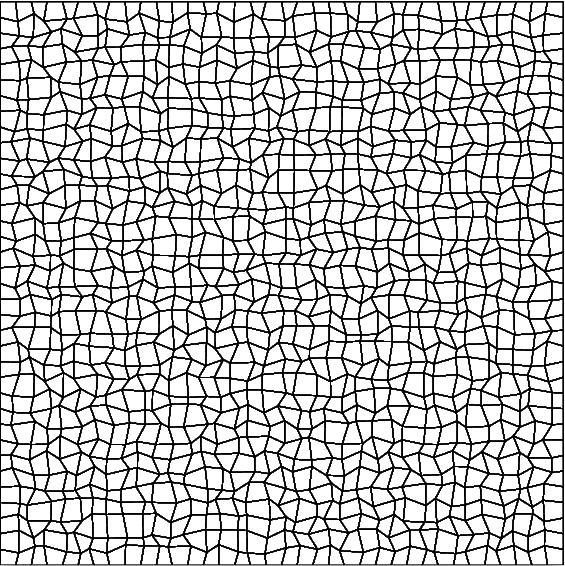}
		\caption{}
		\label{fig:2D_random_vert}
	\end{subfigure}
	\hspace{2ex}
	\begin{subfigure}[b]{0.25\textwidth}
		\centering
		\includegraphics[width=\textwidth]{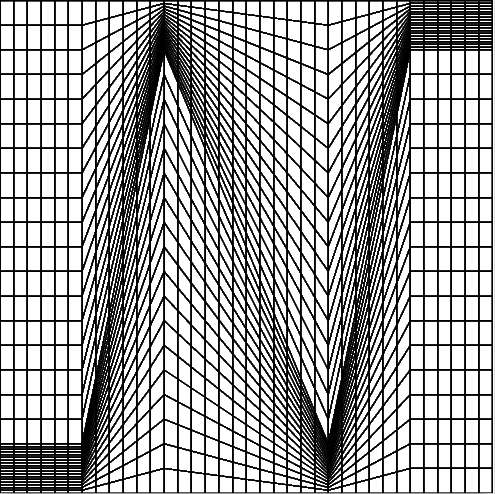}
		\caption{}
		\label{fig:2D_zmesh}
	\end{subfigure}
	\hspace{2ex}
		\begin{subfigure}[b]{0.25\textwidth}
		\centering
		\includegraphics[width=\textwidth]{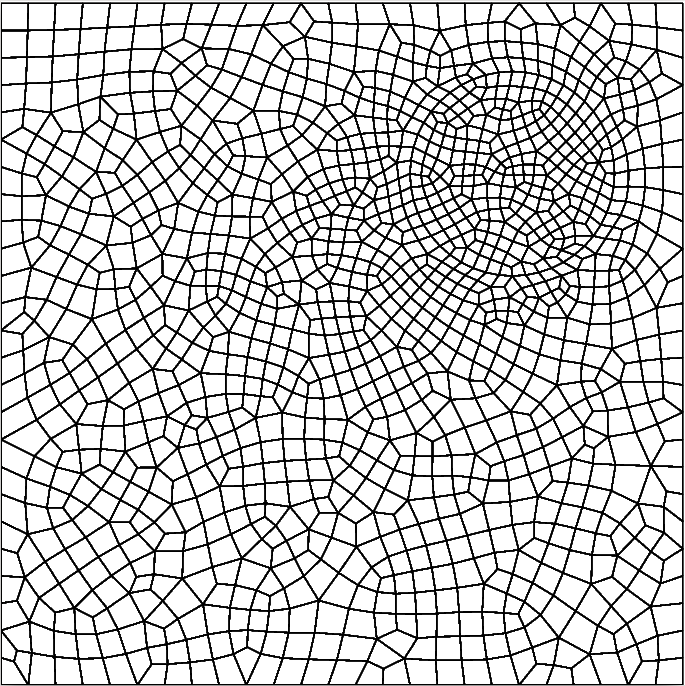}
		\caption{}
		\label{fig:2D_randomized_gmsh}
	\end{subfigure}
	\\
	\begin{subfigure}[b]{0.25\textwidth}
		\centering
		\includegraphics[width=\textwidth]{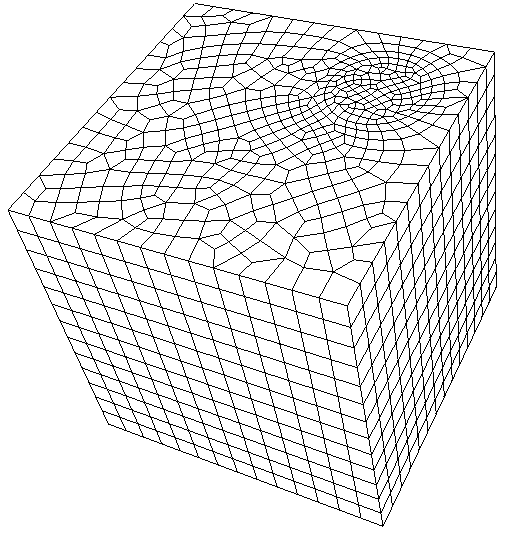}
		\caption{}
		\label{fig:3D_randomized_gmsh}
	\end{subfigure}
	\hspace{2ex}
	\begin{subfigure}[b]{0.25\textwidth}
	\centering
	\includegraphics[width=\textwidth]{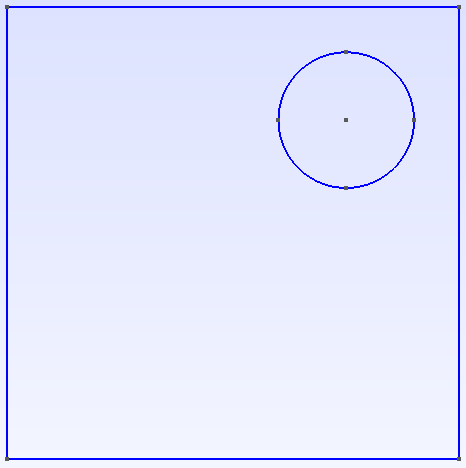}
	\caption{}
	\label{fig:domain_unstruct_mesh}
	\end{subfigure}
	\caption{Four meshes on which pAIR was tested: \protect\subref{fig:2D_random_vert} logically uniform two-dimensional mesh with randomized vertices; \protect\subref{fig:2D_zmesh} two-dimensional zmesh \cite{kershaw1}; \protect\subref{fig:2D_randomized_gmsh} unstructured two-dimensional mesh; \protect\subref{fig:3D_randomized_gmsh} unstructured extruded three-dimensional mesh. }
	\label{fig:mesh_schematics}
\end{figure}

Table \ref{table:2Ddomains_tested} provides the different material compositions tested, with the box and banded
configurations depicted in Figure \ref{fig:domain_schematics}.
The square domain has a side length 100 cm and a volumetric source of 1 $\frac{n}{cm^2}$.
Inhomogeneous cross sections are implemented such that the cross section is constant within each cell.

\begin{figure}[!h]
	\begin{center}
	\includegraphics[scale=0.35]{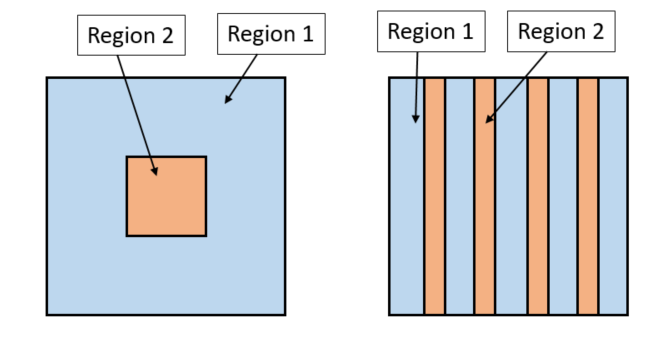}
	\caption{Schematic of the Box1, Box2, and Banded configuration}
	\label{fig:domain_schematics}

	\captionof{table}{Cross-section Values for the Different Material Configurations Tested}\label{table:2Ddomains_tested}
	\begin{tabular}{|c|c|}
	\hline
		\textbf{Material Type} & \textbf{Total Cross Section}  \\\hline
		\hline
		
		Box1 & Inner Box 0.01  $cm^{-1}$, Outer Box 100.0  $cm^{-1}$ \\ \hline
		Box2 & Inner Box 100.0 $cm^{-1}$, Outer Box 0.01 $cm^{-1}$ \\ \hline
		Banded & Region 1 0.01 $cm^{-1}$, Region 2 100.0 $cm^{-1}$ \\ \hline
	\end{tabular}
	\end{center}
\end{figure}

Average convergence factors for the homogeneous domain problems are presented in Table
\ref{tab:2D_varried_homogenous_average_convergence} and the average convergence factors for the problems with
varied cross sections are presented in Table \ref{tab:2d_varried_box_banded_average_convergence_factors}.
For the homogeneous problems, the total time for the sweep update to complete are shown in Table \ref{tab:2d_varried_homogenous_total_solve_time}.
Average convergence factors are computed by averaging the convergence factors for each single direction solve.

\begin{table}[!h]	\centering
	\caption{Average Convergence Factors for a Full Sweep on the Two-Dimensional Meshes with Homogeneous Cross-Sections}
	\label{tab:2D_varried_homogenous_average_convergence}
	\begin{tabular}{|c|c|c|c|c|c|}
		\hline
		Mesh Type   & Relaxation & $\sigma_t=0.0$ & $\sigma_t=0.01$ & $\sigma_t=1.0$ & $\sigma_t=100.0$ \\ \hline \hline
		
		\multirow{2}{*}{Random Vertices}   & Jacobi & 0.080 & 0.080 & 0.071 & 0.018 \\ \cline{2-6}
		&  On Proc Solve & 0.043 & 0.044 & 0.037 & 0.0013 \\ \hline
		
		\multirow{2}{*}{Uniform}  & Jacobi & 0.036 & 0.036 & 0.030 & 0.0023 \\ \cline{2-6}
		&  On Proc Solve & 0.018 & 0.018 & 0.018 & 1.1e-4 \\ \hline
		
		\multirow{2}{*}{zmsh}  & Jacobi  & 0.29~ & 0.29~ & 0.28~ & 0.17~ \\ \cline{2-6}
		&  On Proc Solve & 0.052 & 0.053 & 0.048 & 0.010 \\ \hline
		
		\multirow{2}{*}{gmsh}  & Jacobi  & 0.22~ & 0.22~ & 0.21~ & 0.13~ \\ \cline{2-6}
		&  On Proc Solve & 0.033 & 0.033 & 0.028 & 9.4e-4 \\ 
		
		 \hline
	\end{tabular}
\end{table}
\begin{table}[h!]	\centering
	\caption{Total Solve Time for a Full Sweep on the Two-Dimensional Meshes with Homogeneous Cross-Sections}
	\label{tab:2d_varried_homogenous_total_solve_time}
	\begin{tabular}{|c|c|c|c|c|c|}
		\hline
		Mesh Type   & Relaxation & $\sigma_t=0.0$ & $\sigma_t=0.01$ & $\sigma_t=1.0$ & $\sigma_t=100.0$ \\ \hline \hline
		
		\multirow{2}{*}{Random Vertices}   & Jacobi & 20.5 s& 20.2 s& 19.1 s& 10.8 s\\ \cline{2-6}
		&  On Proc Solve & 17.4 s& 17.2 s& 16.9 s& 8.63 s\\ \hline
		
		\multirow{2}{*}{Uniform}  & Jacobi & 11.2 s& 11.1 s& 11.0 s& 5.82 s\\ \cline{2-6}
		&  On Proc Solve & 11.4 s& 11.3 s& 10.1 s& 4.54 s \\ \hline
		
		\multirow{2}{*}{zmsh}  & Jacobi  &34.4 s& 34.3 s& 33.4 s& 21.5 s\\ \cline{2-6}
		&  On Proc Solve & 16.5 s& 16.4 s& 15.9 s& 10.0 s\\ \hline
		
		\multirow{2}{*}{gmsh}  & Jacobi  & 15.7 s& 15.5 s& 15.7 s& 9.70 s\\ \cline{2-6}
		&  On Proc Solve & 8.27 s& 8.25 s& 8.16 s& 4.02 s\\ 
		
		\hline
	\end{tabular}
\end{table}
\begin{table}[h!]	\centering
	\caption{Average Convergence Factors for a Full Sweep on the Two-Dimensional Meshes with Inhomogeneous Cross-Sections}
	\label{tab:2d_varried_box_banded_average_convergence_factors}
	\begin{tabular}{|c|c|c|c|c|}
		\hline
		Mesh Type   & Relaxation & box1 & box2 & banded  \\ \hline \hline
		
		\multirow{2}{*}{Random Vertices}   & Jacobi & 0.079 & 0.041 & 0.043 \\ \cline{2-5}
		&  On Proc Solve & 0.042 & 0.0075 & 0.0064 \\ \hline
		
		\multirow{2}{*}{Uniform}  & Jacobi & 0.034 & 0.010 & 0.0098 \\ \cline{2-5}
	
		&  On Proc Solve & 0.019 & 0.0012 & 9.7e-04 \\ \hline
		
		\multirow{2}{*}{zmsh}  & Jacobi  & 0.29 & 0.19 & 0.27 \\ \cline{2-5}
		&  On Proc Solve & 0.048 & 0.010 & 0.032 \\ \hline
		
		\multirow{2}{*}{gmsh}  & Jacobi  & 0.021 & 0.015 & 0.018 \\ \cline{2-5}
		&  On Proc Solve & 0.0032 & 0.0046 & 0.0066\\
		
		\hline
	\end{tabular}
\end{table}

For all cases tested, simple Jacobi relaxation results in reasonable convergence factors, but the on-processor
solve as a relaxation always outperforms Jacobi relaxation.
The convergence factors for the unstructured mesh are similar to those of the two logically rectangular grids.
For all mesh types investigated, results for the Box1 configuration are very similar to the homogeneous domain
test with a cross section of 100.0 $cm^{-1}$.
The Box1 configuration has a large outer opaque region and a smaller inner transparent region.
The Box2 configuration results are more similar to the homogeneous results with a cross section of 0.01 $cm^{-1}$.
An important factor for the solver appears to be the number of optically thin cells, suggesting that results from
homogeneous (thin) domains should be a worst-case on heterogeneous domains, consistent with results in \cite{AIR1}.

In three dimensions, a uniform structured grid and an unstructured extruded mesh are investigated, with approximately
16,600 spatial DOFs per processor. The unstructured grid is shown in Figure \ref{fig:mesh_schematics}.
Average convergence factors are presented in Table \ref{tab:2d_varried_box_banded_average_convergence_factors} and
Table \ref{tab:3D_varried_homogenous_average_convergence}.
Trends similar to those seen in the two-dimensional results are again apparent in the three dimensional results.

\begin{table}[h!]	\centering
	\caption{Average Convergence Factors for a Full Sweep on the Three-Dimensional Meshes}
	\label{tab:3D_varried_homogenous_average_convergence}
	\begin{tabular}{|c|c|c|c|c|c|}
		\hline
		Mesh Type   & Relaxation & $\sigma_t=0.0$ & $\sigma_t=0.01$ & $\sigma_t=1.0$ & $\sigma_t=100.0$ \\ \hline \hline
		
		\multirow{2}{*}{Uniform}   & Jacobi & 0.14 & 0.14 & 0.13 & 0.034 \\ \cline{2-6}
		&  On Proc Solve & 0.075 & 0.075 & 0.069 & 0.0053 \\ \hline
		
		\multirow{2}{*}{Unstructured}  & Jacobi & 0.17 & 0.17 & 0.16 & 0.043 \\ \cline{2-6}
		&  On Proc Solve & 0.010 & 0.010 & 0.094 & 0.0081 \\
		
		\hline
	\end{tabular}
\end{table}

\subsection{Comparison of pAIR with Domain-Wise Block Jacobi}\label{sec:perf:jacobi}

The previous section compared pAIR time to solution when using Jacobi relaxation and an on-processor
solve for relaxation. In some cases, the transport sweep in source iteration is actually replaced with the same
on-processor solve as an approximate inverse (also known as inexact parallel block Jacobi)
\cite{warsa2003improving}. This can be thought of as a block Jacobi iteration,
with blocks given by the domain each processor owns. Here, we make a direct comparison between pAIR using an
on-processor solve for relaxation, and GMRES preconditioned with an on-processor solve.

\begin{figure}[!hbt]
	\centering
	\begin{subfigure}[b]{0.475\textwidth}
		\centering
		\includegraphics[width=\textwidth]{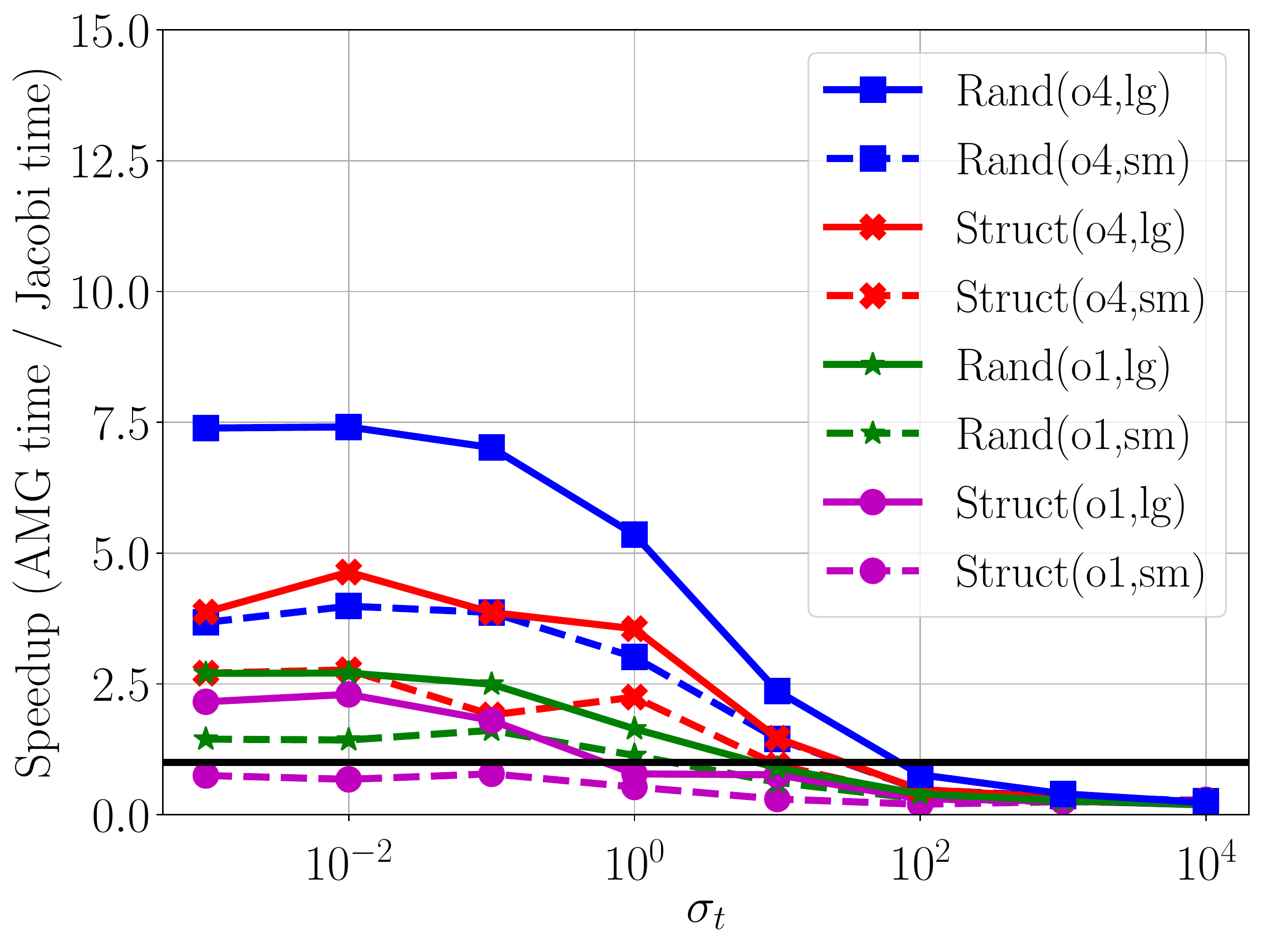}
		\caption{512 Processors}
		\label{fig:comp_jacobi_512}
	\end{subfigure}
	\begin{subfigure}[b]{0.475\textwidth}
		\centering
		\includegraphics[width=\textwidth]{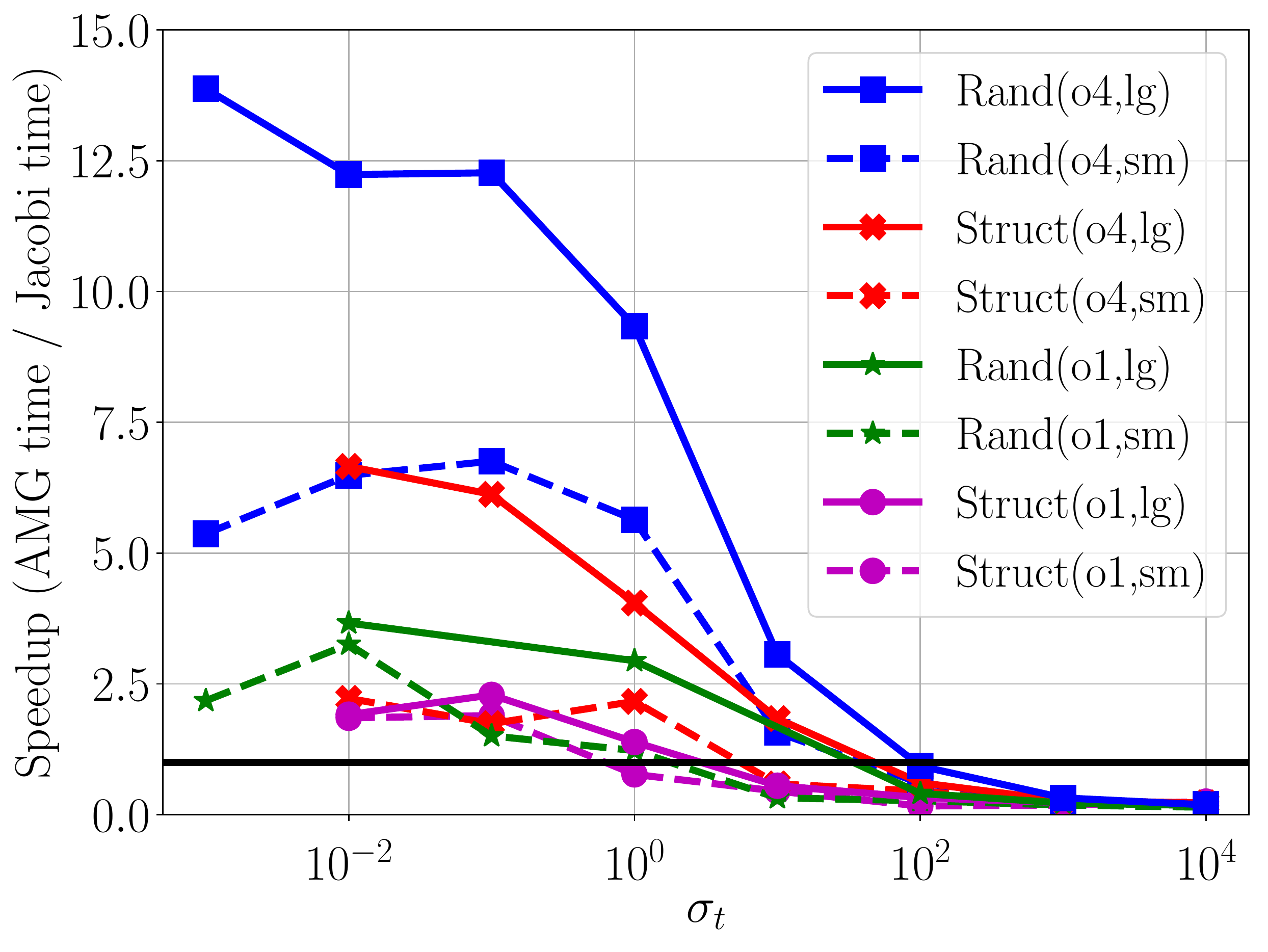}
		\caption{2048 processors}
		\label{fig:comp_jacobi_2048}
	\end{subfigure}
	  \caption{Comparison of pAIR and GMRES preconditioned with an on-processor solve, as a function of total opacity
  $\sigma_t$, for first- (o1) and fourth-order (o4) finite elements, on structured (struct) and unstructured (rand)
  meshes, and for a small (sm) and large (lg) on-processor problem size. Local DOFs range from 8,000 for Struct(o1,sm)
  to 700,000 for Rand(o4,lg).}
\label{fig:comparison}
\end{figure}

Results are based on a simple 2d test problem with homogeneous total cross section
$\sigma_t = C$, for some constant $C$, and direction $\Omega = (\cos(\theta),\sin(\theta))$, with
$\theta = 3\pi/16$. Figure \ref{fig:comparison} shows the speedup that pAIR provides over GMRES preconditioned
with on-processor solve, as a function of $\sigma_t$, and for a multiple mesh types, finite-element orders, and local
problem sizes. For $\sigma_t \gg 1$, so-called ``optically thick
regimes,'' the matrix is already well-conditioned and an on-processor relaxation converges rapidly without
multigrid. For thin regimes, $\sigma_t < 1$, however, pAIR leads to as much as a $14\times$ speedup
over GMRES and the on-processor solve. These tests only went up to 2048 processors, but as can be
seen, this speedup will continue to grow as more and more processors are used in a weak scaling sense
(because block-processor is a Jacobi iteration, and convergence deteriorates as problem size increases),

Note, when an on-processor solve is used in source iteration, typically only one or a few iterations are performed
for each source iteration (that is, it is not applied until convergence). However, if there are thin regions in the
domain, convergence of the larger source iteration will likely be more-or-less defined by convergence of the
block Jacobi algorithm on the thin regions (where convergence will be slowest). The following section
investigates how accurately systems must be solve by pAIR.

\section{Accuracy of Spatial Solves} \label{sec:accuracy}

A unique aspect of using an iterative method to solve the linear transport equation compared with traditional sweeps
(which are an exact solve) is the question as to how accurately each linear system should be solved. In fact, this question
has subtle practical implications, and how accurately the systems are solved actually defines the solution that source
iteration converges to.

Consider $M=3$ angles and one energy group. Then the full discretized set of equations, replacing the scattering integral with
a sum over angular quadrature weights $\{\omega_i\}$ and corresponding spatial discretization $\{\mathcal{L}_i\}$, 
can be written as a block linear system,
\begin{align*} 
\begin{bmatrix} \mathcal{L}_1 &&& - \sigma_s I \\
&\mathcal{L}_2&& -\sigma_s I\\
&&\mathcal{L}_3& -\sigma_s I \\
\omega_1 I &  \omega_2 I &  \omega_3 I &  -I \end{bmatrix}
\begin{bmatrix} \psi_1 \\ \psi_2 \\ \psi_3 \\ \varphi \end{bmatrix} = \begin{bmatrix}q_1 \\ q_2 \\ q_3 \\ 0\end{bmatrix},
\end{align*}
where $\mathcal{L}_i \sim \underline{\Omega_i}\cdot\nabla + \sigma_t$. For memory purposes, the standard approach
in transport simulation is to eliminate the block diagonal matrix corresponding to the angular flux vectors, and iterate
only on the scalar flux $\varphi$. This corresponds to a Schur complement problem for the scalar flux,
$\mathcal{S}\varphi = \mathbf{b}$, where
\begin{align*}
\mathcal{S} & := I - \begin{bmatrix} \omega_1 I &  \omega_2 I &  \omega_3 I I \end{bmatrix}
	\begin{bmatrix}  \mathcal{L}_1^{-1} & \\ &\mathcal{L}_2^{-1}&\\ &&\mathcal{L}_3^{-1} \end{bmatrix} \begin{bmatrix}
	\sigma_s I \\  \sigma_s I \\  \sigma_s I\end{bmatrix} 
= I - \sum_{i=1}^f \omega_i\sigma_s\mathcal{L}_i^{-1}.
\end{align*}
Source iteration corresponds to performing Richardson iteration on $\mathcal{S}\varphi = \mathbf{b}$,
\begin{align*}
\varphi^{(i+1)} & = \varphi^{(i)} + \mathbf{b} - \mathcal{S}\varphi^{(i)} 
	 = \mathbf{b} + \sum_{i=1}^f \omega_i\sigma_s\mathcal{L}_i^{-1}\varphi^{(i)}.
\end{align*}
For a detailed discussion on transport iterations in a linear algebra framework, see \cite{Faber:1989wo,DSA19}.

Note that a transport sweep, that is computing the action of $\mathcal{L}_i^{-1}$ for $i=1,...,M$, is used
to compute \textit{the action} of the operator $\mathcal{S}$. Now suppose that we do not invert $\mathcal{L}_i$
exactly and instead apply $\widehat{\mathcal{L}}_i^{-1}$, denoting some $\ell$ iterations of pAIR for $\mathcal{L}_i$.
This corresponds to applying Richardson iteration to the modified problem
$\widehat{\mathcal{S}}\varphi = \mathbf{b}$, where $\widehat{S}:= I - \sum_{i=1}^3 \omega_i\sigma_s\widehat{\mathcal{L}}_i^{-1}$.
Thus in applying approximate inverses $\widehat{\mathcal{L}}_i^{-1}$, corresponding to, say, ten pAIR V-cycles,
we actually converging Richardson/source iteration to a modified angular flux problem, defined by the approximate
inverses.

There is a standard question in finite element discretizations and linear solvers as to how accurately the
liner system must be solved to achieve discretization accuracy (that is, the solution to the linear system is
sufficiently accurate that solving to increased accuracy does not lead to better approximation of the continuous
solution). Although the question is slightly more complicated here due to the approximate $\widehat{\mathcal{S}}$
discussed above, a similar question arises -- how many V-cycles must we apply such that the solution of
$\widehat{\mathcal{S}}\varphi = \mathbf{b}$ is just as good of an approximation to the continuous PDE as
the solution of ${\mathcal{S}}\varphi = \mathbf{b}$?
Some insight into this question is provided in Figure \ref{fig:mms_error_versus_pAIR_iterations_plot}, which
shows the error (from continuous solution) in the computed scalar flux with a fixed number of pAIR iterations, along
with the average final pAIR residual.
The error is computed based on a method of manufactured solutions (MMS) for a three dimensional cube with side length of
1~cm discretized and a structured grid. The MMS source is provided in the Appendix. Also shown in the figure is the relative residual for the final source iteration.

\begin{figure}[!hbt]
	\centering
	\includegraphics[width=0.75\textwidth]{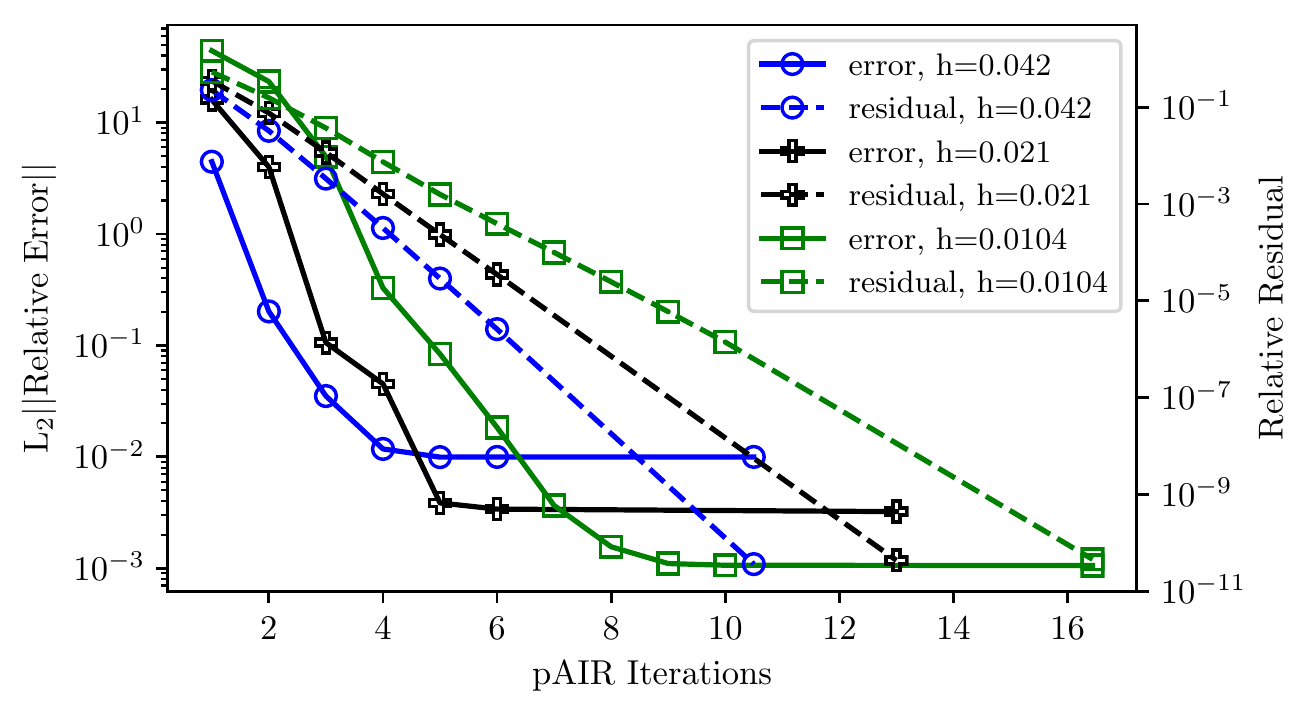}
	\caption{Comparison of relative error in solution and relative residual versus the number of pAIR iterations. Results from three successively refined meshes are shown where h is the spacing between vertices.}
	\label{fig:mms_error_versus_pAIR_iterations_plot}
\end{figure}

As expected, as mesh spacing $h\to 0$, the linear systems must be solved to increasingly accurate
tolerances, because the solution is also increasingly accurate. However, it is clear that the systems do
not need to be solved to machine precision, and only a handful of iterations are necessary for good
convergence.

\section{Weak Scaling with Parallelism in Space}
\label{sec:weak_scaling_parallel_in_space}

As discussed in Section \ref{sec:AIR_introduction}, for an ideal multigrid method, the total time to solution in a weak scaling test
should grow linearly in log-log space, with a slope equal to one when plotted with the logarithm of total problem size (or equivalently
the total number of processors). In practice, factors such as growth of the convergence factor and coarse-grid fill in can lead to
a slope $m$, corresponding to $log^m(P)$, that is greater than one.

Weak scaling results to 4,096 processors for both a three-dimensional uniform grid and unstructured grid are shown in Figure
\ref{fig:3d_scaling_no_filter}.
A uniform volumetric source of strength 1 $\frac{n}{cm^3s}$ was used for both tests.
The calculations are performed with a square Chebyshev-Legendre $S_4$ quadrature set having 32 directions \cite{osti_5958402}.
For the uniform grid, the spatial domain size was fixed at 32,768 DOFs per processor, while the unstructured grid has 50,960 DOFs
per processor. Both of these results were generated with distance-2 $\ell$AIR, pointwise Jacobi relaxation, Falgout coarsening with
a strength of connection (SOC) parameter of 0.25, and a strength parameter for $R$ of $\epsilon_r=0.01$ (see Section
\ref{sec:scaling:soc}). 
In both cases, time to solution grows linear in log-log space for both solve time and setup time,
an important initial demonstration of the method's scalability.

\begin{figure}[!htb]
	\centering
	\includegraphics{./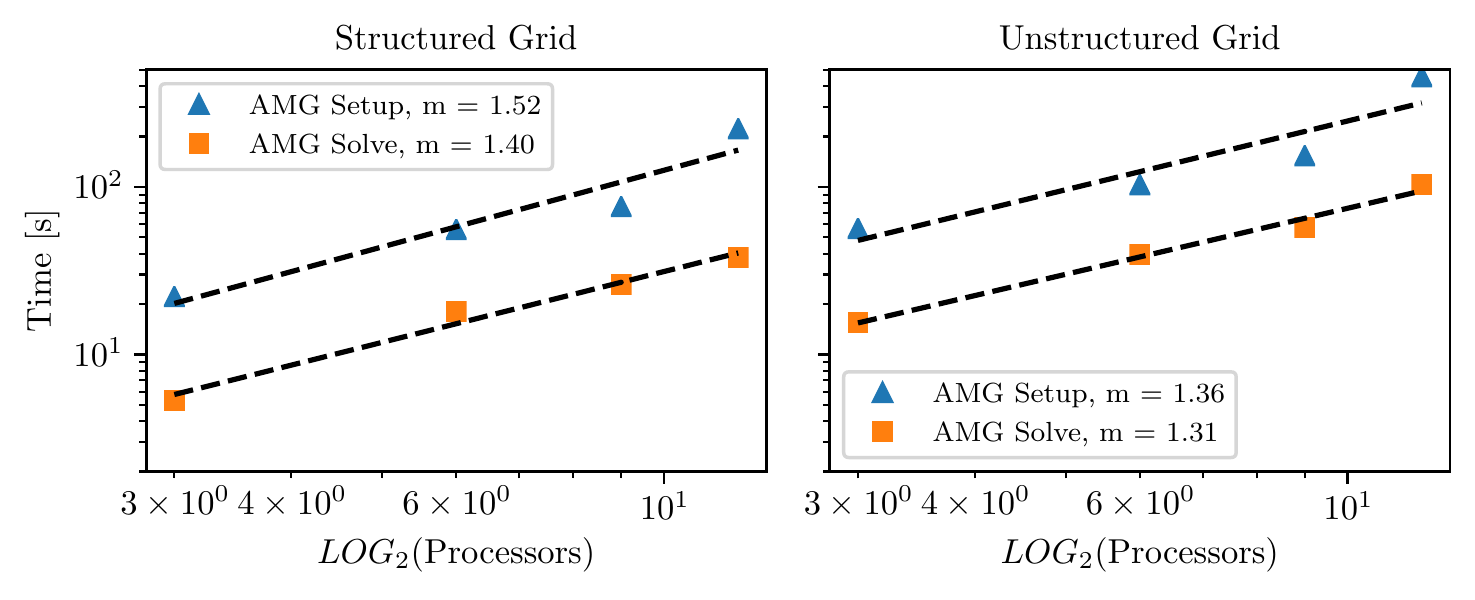}
	\caption{Scaling results for both a uniform grid and unstructured mesh for distance-2 $\ell$AIR with no operator filtering. Time reported is that to solve all 32 directions in the angular quadrature set.}   
	\label{fig:3d_scaling_no_filter}
\end{figure}

\subsection{Relevance of Operator Filtering and SOC Parameter}

The AMG algorithm includes a number of parameters which affect various aspects of performance.
Additionally, pAIR, like other AMG methods, depends on distinct algorithms that accomplish tasks in the overall method, and different algorithms can be used to accomplish the same task.
A parameter study is not included as part of this paper, however, some general considerations regarding two important parameters are discussed in the next two sections.
These two parameters are particularly relevant to scalability and overall time to solution.
They also most markedly affect the total cost of pAIR and should thus be first considered when using pAIR in practice to solve the transport equation.

\subsubsection{SOC Parameter for Building $R$}\label{sec:scaling:soc}
When building the approximate ideal restriction operator $R$, each row corresponds to a C-point and has a nonzero
sparsity pattern of F-point neighbors. Instead of finding the neighborhood of F-points in the matrix $A$, the
neighborhood is based on a SOC matrix, where only ``strongly'' connected F-points are chosen for the sparsity pattern. 
This increases the sparsity of $R$ as well as decreasing the cost of computing $R$. A coefficient is included in the SOC
matrix if it is strongly connected in a classical AMG sense, that is, all columns $\{j\}$ in row $i$ such that

\begin{equation}\label{eq:soc}
|a_{ij}|\geq \epsilon_{R} max_{i\ne j}\left(|a_{ij}|\right.
\end{equation}
Here a new SOC parameter is introduced, $\epsilon_{R}$, that is generally different than the parameter used for coarsening.
Note also that the absolute value of coefficients are used instead of the negative values.

Construction of $R$ using distance-1 and distance-2 $\ell$AIR is discussed in Section \ref{sec:AIR_introduction}.
In general, convergence factors will improve as the SOC parameter $\epsilon_{R}$ is made smaller.
However, the time it takes to build $R$ and apply pAIR will increase as more connections are considered.
Figure \ref{fig:soc_r_study_convergence_factors} shows average convergence factors for different $\epsilon_{R}$ parameter values
for distance-2 and distance-1 $\ell$AIR respectively, and Figure \ref{fig:soc_r_study_setup_versus_total_time} shows
the total pAIR time and setup time for different $\epsilon_{R}$ parameter values for distance-2 $\ell$AIR.
Results are presented for both a uniform grid and the unstructured mesh. Although results shown are for a single direction,
similar results are seen for each direction in the angular quadrature set investigated. 

For both distance-1 and distance-2 $\ell$AIR, as $\epsilon_{R}$ is decreased, the convergence factors decrease.
Additionally, there is less convergence factor growth apparent in the weak scaling test for small values of $\epsilon_{R}$.
Note that for distance-1 $\ell$AIR, the convergence factors stop significantly decreasing as $\epsilon_{R}$ becomes smaller.
This is because after a certain point, most of the distance-1 neighbors are already being considered and decreasing $\epsilon_{R}$
does not include any new neighbors. For distance-2 $\ell$AIR, there are many more possible neighbors and significant
differences can be seen in the average convergence factors even between the two relatively small $\epsilon_{R}$ values of 0.05 and 0.01.
\begin{figure}[!htb]
	\centering
	\includegraphics[width=\textwidth]{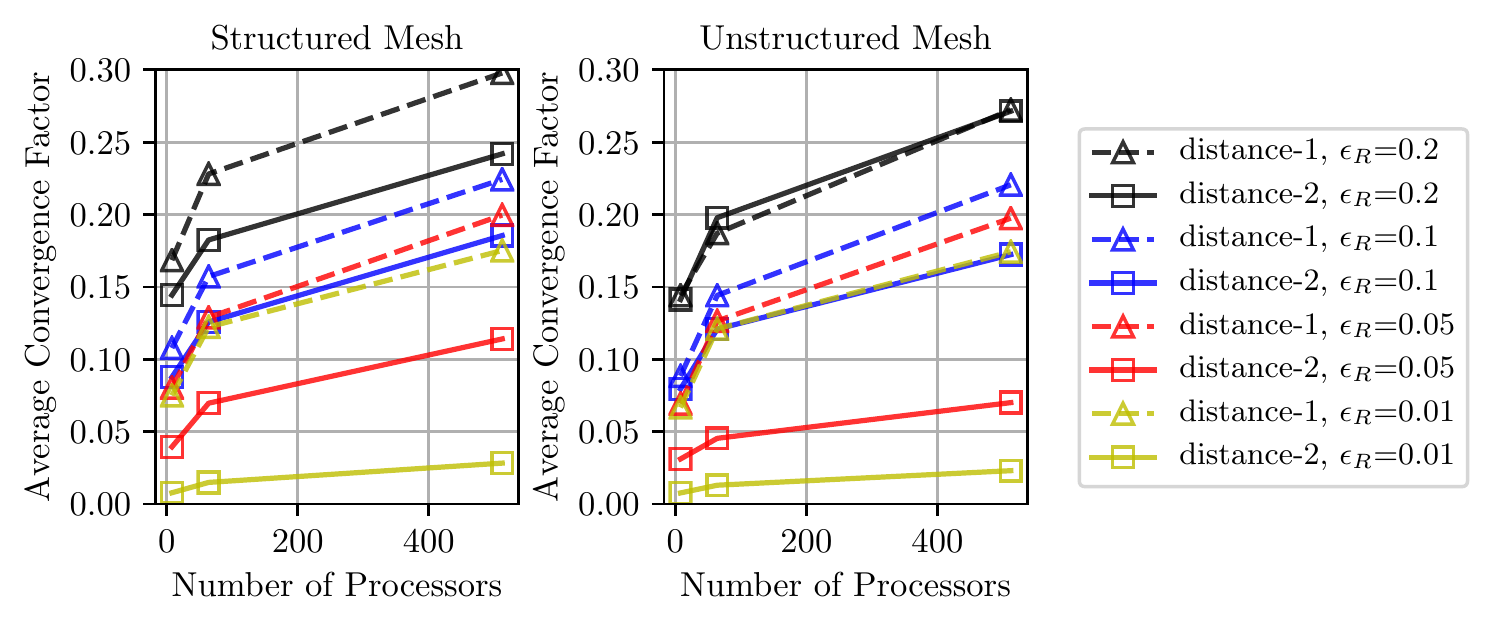}
	\caption{Average convergence factors for both a uniform mesh and unstructured mesh as $\epsilon_{R}$ is made smaller shown for distance-1 and distance-2 $\ell$AIR.}
	\label{fig:soc_r_study_convergence_factors}
\end{figure}

Interestingly, in direct contrast to convergence factors, total time to solution actually increases as $\epsilon_R$ decreases, in particular
due to an increase in setup time. Note in Figure \ref{fig:soc_r_study_setup_versus_total_time} that the solve time (difference between
total and setup) decreases slowly with smaller $\epsilon_R$, but the setup time increase dramatically. Trends are similar for both the
uniform mesh and unstructured mesh. The increase in total time for pAIR as $\epsilon_{R}$ decreases is more drastic for distance-2
$\ell$AIR than distance-1 $\ell$AIR, but trends are similar in both cases.
Clearly, for the scaling tests presented Section \ref{sec:weak_scaling_parallel_in_space},
the largest value of $\epsilon_{R}$ that still provides for reasonable convergence should be used.

\begin{figure}[!htb]
	\centering
	\includegraphics[width=\textwidth]{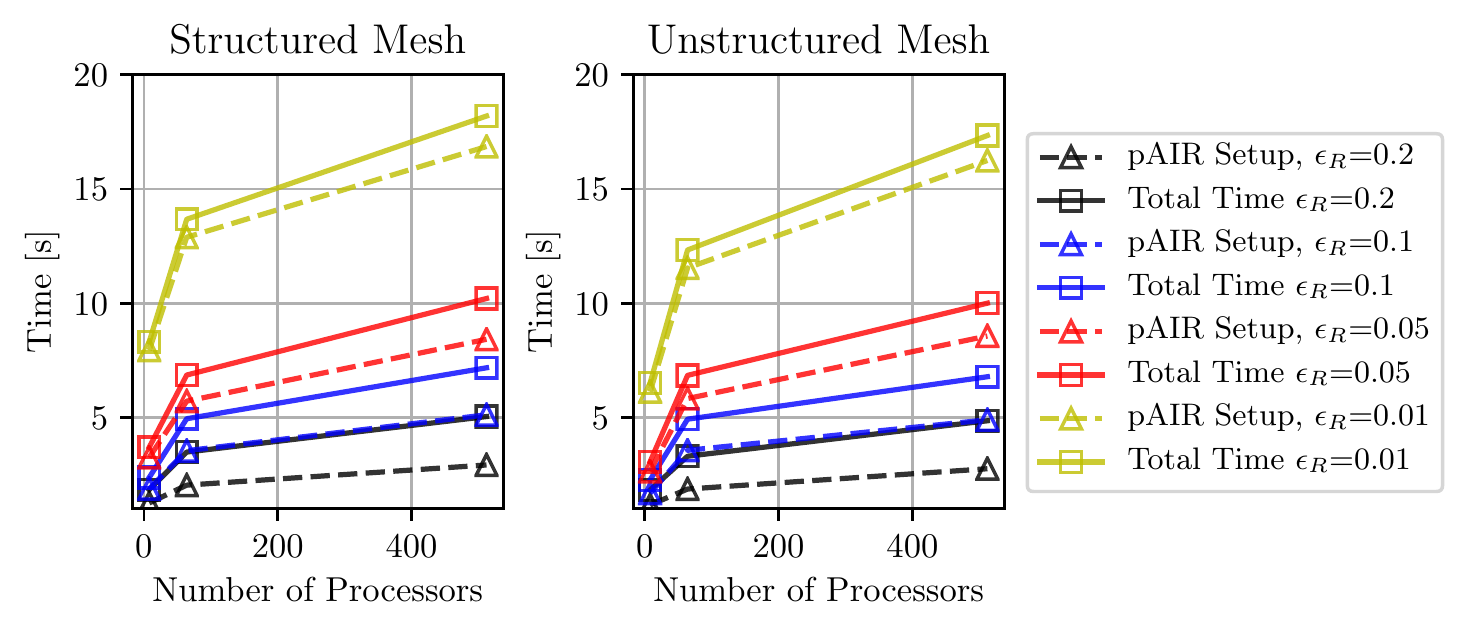}
	\caption{Comparison of total time to solution with setup time for different $\epsilon_R$ parameter values.}
	\label{fig:soc_r_study_setup_versus_total_time}
\end{figure}

\subsubsection{Stencil Growth and Filtering of $R$}

Ideally as problem size increases, the size (in memory) or ``complexity'' of the AMG hierarchy relative to initial problems size would
remain constant, and the average convergence factor would also remain constant. Classical AMG algorithms can exhibit these
qualities for some problems such as certain 2-dimensional discretizations of elliptic PDE problems. However, in 3-dimensions,
even these PDE problems may exhibit an increasing complexity as the mesh is refined. Complexity can be analyzed in several
different ways. One indicator of overall complexity is growth in average stencil size as the mesh is refined.
This is equivalent to examining the average number of nonzeros per row for each operator in the AMG hierarchy.
For all problems examined in this paper, stencil growth is apparent for a variety of parameter settings.

One strategy for limiting stencil growth is filtering small coefficients from operators. Filtering of matrices in the AIR hierarchy
is discussed in \cite{AIR2} where is it pointed out that while filtering of values from symmetric matrices is a sensitive procedure,
a system arising from a hyperbolic problem should be less sensitive to filtering. Here we extend that notion to introduce a
filtering on $R$. This is analogous to the pre- and post-filtering done for interpolation sparsity patterns in the root-node AMG
algorithm \cite{Manteuffel:2017}. Here, a SOC based on $\epsilon_R$ is used to determine an initial sparsity pattern, then $R$ is
constructed. Once $R$ is constructed, an analogous SOC as in Equation \eqref{eq:soc} is applied using a new tolerance, $\phi_R$,
and non-strong (i.e., weak) entries are eliminated from $R$. 

Figure~\ref{fig:filter_A_versus_R_for_distance_1} shows the significant reduction filtering in $R$ has on stencil growth for a
representative three-dimensional problem, and Figure~\ref{fig:3d_scaling_best_result} shows scaling results where small
coefficients have been filtered from $R$ after construction. The pAIR solve is still robust even with large filtering parameter
values, and the time to solution and memory consumption drop accordingly. 

\begin{figure}[!htb]
	\centering
	\includegraphics[width=\textwidth]{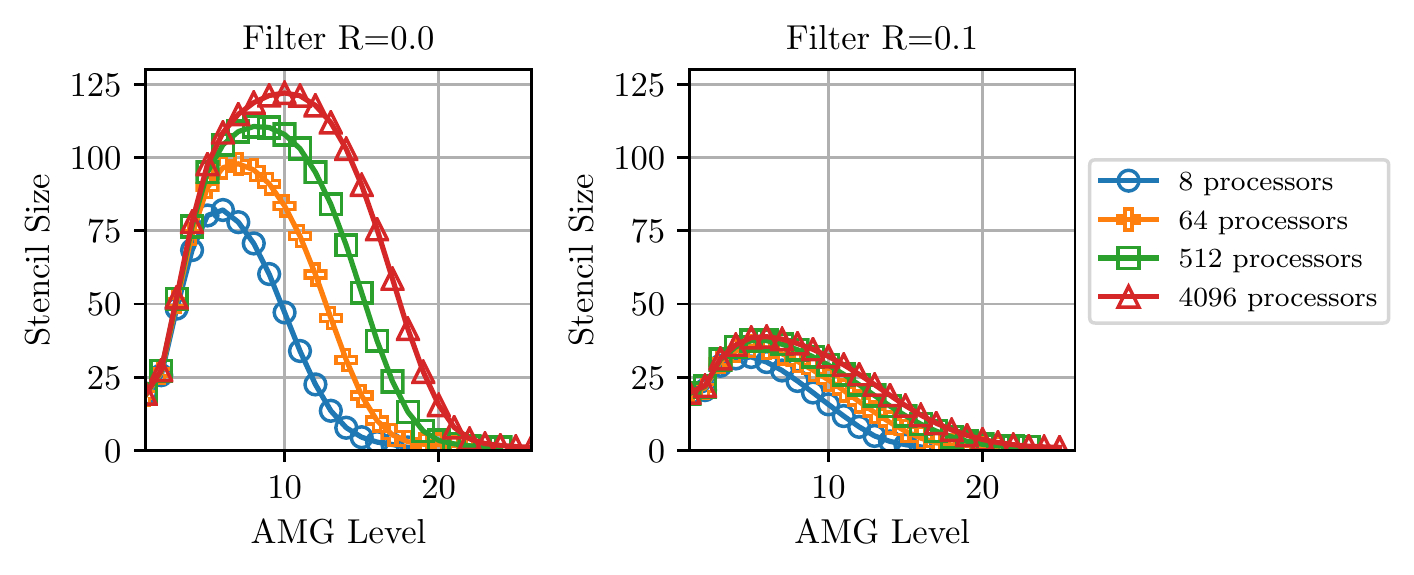}
	\caption{Comparison of aggressive filtering from $R$ to no filtering shown with distance-1 $\ell$AIR for a representative three-dimensional problem}
	\label{fig:filter_A_versus_R_for_distance_1}
\end{figure}

\begin{figure}[!htb]
	\centering
	\includegraphics[scale=0.9]{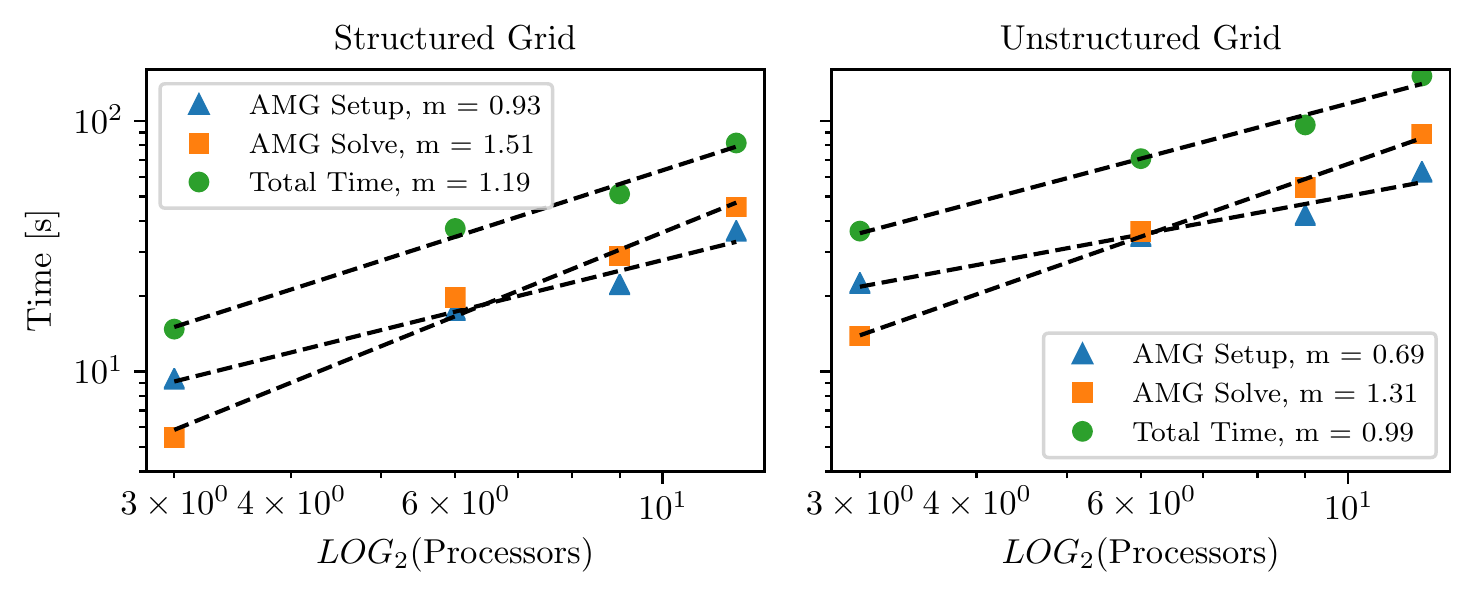}
	\caption{Scaling results for both a uniform grid and unstructured mesh for distance-1 nAIR with aggressive operator filtering using a filter parameter of $\phi_R = 0.1$. Time reported is that to solve all 32 directions in the angular quadrature set.}   
	\label{fig:3d_scaling_best_result}
\end{figure}

\section{Parallel in Angle and Minimizing Communication}
\label{sec:multiple_meshes}

In highly parallel environments, whether the transport sweeps are solved using pAIR or a traditional parallel
sweep, communication is often the dominant cost, particularly when considering strong scaling. This will be especially 
true on emerging-type architectures where nodes have GPUs or manycore processors that are particularly fast
on threaded, shared-memory-type floating-point operations. To reduce communication cost, we propose
including parallelism in the angular and energy domains, in addition to the spatial domain, by storing multiple 
copies of the spatial discretization. Multiple angles/energies can then be solved for simultaneously, and the
overhead communication cost to communicate angular and energy solutions is small compared to the communication
cost of spatial solves. Note that analogous ideas have been considered in the context of traditional transport
sweeps as well, originally in \cite{Dorr:1993hx,Dorr:1996ec}.

This section develops a model for the number of parallel communication stages of pAIR in terms of number of
processors, depending on how many copies of the spatial problem are used. In particular, it is shown that the
minimum number of stages of communication is achieved by using the
maximum number of spatial meshes that can fit in memory. Although this does not account for computation
time, when using pAIR it is a reasonable proxy for computation time as well. When using an on-processor 
solve for relaxation with pAIR, the larger the portion of spatial domain that is stored on processor, the faster
convergence will be, with the obvious limit of a forward (exact) solve in one iteration when the spatial domain is stored on
one processor. The performance model is presented in Section \ref{sec:multiple_mesh_theory}, followed by
numerical results in Section \ref{sec:multiple_mesh_results}. For ease of readability, proofs for theoretical results
in Section \ref{sec:multiple_mesh_theory} are left to the appendix. 

\subsection{A Performance Model}
\label{sec:multiple_mesh_theory}

This section develops a simple performance model for stages of communication in a pAIR transport sweep
as a function of the number of copies of the spatial discretization. The basic result says that the minimum
number of stages of communication is obtained with the maximum number of spatial meshes that can fit
in memory.

Let $N$ denote the total spatial degrees-of-freedom (DOFs), $M$ the total angular DOFs, $E$ the total energy
DOFs, $P$ the number of processors, and $C_P$ the memory capacity of each processor. We are solving
for the angular/energy flux, $\psi_{\ell,\nu}(\textbf{x})$, where $\ell$ denotes the angle, $\nu$ the energy, and $\textbf{x}$
the spatial point. Thus, $\psi_{\ell,\nu}(\textbf{x})$ has dimension $NME$. We must also account for the memory
to store pAIR hierarchies, where one hierarchy is typically equivalent to storing 2--4 copies of the matrix for a
single angle. Because memory is a leading constraint in transport simulations, here we only consider the
case of storing one hierarchy at a time and, thus, rebuilding a hierarchy for every solve. Following from this
discussion, assume that $P$ processors have sufficient storage for this problem.

\begin{assumption}\label{ass2}
	Assume 
	\begin{equation} \label{eq:Ass2}
	PC_P \geq  NC_T(ME + C_{A}),
	\end{equation}
	where $C_T$ is a factor that describes the memory required to store each spatial DOF, plus any auxiliary
	vectors, $C_{A}$ a factor that describes the memory to store an AMG hierarchy for a single angle, relative
	to the storage for all spatial DOFs (should be $\approx$ 2--4).
\end{assumption}

Then, divide the machine into $K$ equal partitions, each of which includes a full spatial discretization,
and distribute the $ME$ angle and energy DOFs across the $K$ partitions. That is,
each spatial partition addresses $M_1 = M/K_A$ angles and $E_1 = E/K_E$ energy groups,
where $K = K_A K_E$. This implies $M_1 E_1 = ME/K$.  We further
assume that there are not more partitions than angle/energy DOFs. 

\begin{assumption}\label{ass:ME}
	Assume that each angle/energy DOF is associated with only one partition, that is, $K \leq ME.$
\end{assumption}
In the remainder of this section, we assume that all $M_1 E_1$ angle/energy DOFs associated with
a spatial node are stored on the same processor. This simplifies the accumulation of scalar flux
and the scattering source. 

The size of $K$ is further restricted by the requirement that the capacity of $P_x$ processors will
accommodate all $M_1E_1$ local DOFs and an AMG hierarchy for a single angle/energy. This
leads to the following constraint:
\begin{constraint}\label{const2}
	\begin{align}
	P_x C_P & \geq N C_T(M_1E_1 + C_A) \hspace{5ex}\Longleftrightarrow\hspace{5ex}
	K \leq \frac{PC_P- NC_TME}{C_aNC_T}.
	\end{align}
\end{constraint}
Note that Assumption \ref{ass2} implies Constraint \ref{const2} with $K = 1$. We can now state the
primary results.

\begin{theorem}\label{th:optK}
	Assume Constraint \ref{const2} and Assumption \ref{ass:ME} hold and that pAIR scales like $\mathcal{O}(\log P)$.
	Then the optimal $K$ with respect to minimizing the number of communication stages is given by
	\begin{align*}
	K_{opt} = \min\left\{ P, ME, \frac{PC_P- NC_TME}{C_aNC_T}\right\}.
	\end{align*}
\end{theorem}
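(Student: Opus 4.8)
The plan is to set up an explicit expression for the number of communication stages as a function of $K$ and then minimize it over the admissible range of $K$ determined by Constraint \ref{const2} and Assumption \ref{ass:ME}. First I would fix what "stage of communication" means in this context: with $K$ spatial partitions, each partition has $P_x = P/K$ processors, and a single pAIR solve on one angle/energy DOF uses a spatial mesh distributed over $P_x$ processors. Under the stated assumption that pAIR scales like $\mathcal{O}(\log P_x)$, one pAIR solve costs $\mathcal{O}(\log(P/K))$ stages. Each partition must process $M_1 E_1 = ME/K$ angle/energy DOFs sequentially (since only one AMG hierarchy is stored at a time), so the spatial-solve cost per source iteration is $\mathcal{O}\!\left(\frac{ME}{K}\log\frac{P}{K}\right)$. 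To this one adds the cost of the inter-partition reduction needed to accumulate the scalar flux / scattering source across the $K$ partitions, which is $\mathcal{O}(\log K)$ stages. So the model to minimize is
\begin{equation*}
T(K) = \alpha\,\frac{ME}{K}\,\log\frac{P}{K} + \beta\,\log K
\end{equation*}
for positive constants $\alpha,\beta$, over integers $K$ with $1 \le K \le \min\{P, ME, K_{\max}\}$, where $K_{\max} = \frac{PC_P - NC_TME}{C_aNC_T}$ is the capacity bound from Constraint \ref{const2}.

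Next I would show $T$ is decreasing on the admissible interval, so the optimum is the right endpoint $K_{opt} = \min\{P, ME, K_{\max}\}$. The first term $\frac{ME}{K}\log\frac{P}{K}$ is manifestly decreasing in $K$ (both $\frac{ME}{K}$ and $\log\frac{P}{K}$ decrease), and it is the dominant term: it carries the factor $ME$, which by Assumption \ref{ass:ME} is at least $K$, whereas the reduction term $\beta\log K$ grows only logarithmically and represents cheap angle/energy communication that the paper has already argued is small relative to spatial solves. The cleanest way to make this rigorous is to compare $T(K)$ and $T(K+1)$ directly, or to differentiate the continuous relaxation: $\frac{d}{dK}\left[\frac{ME}{K}\log\frac{P}{K}\right] = -\frac{ME}{K^2}\left(\log\frac{P}{K} + 1\right)$, and $\frac{d}{dK}[\beta\log K] = \beta/K$, so $T'(K) < 0$ whenever $\frac{ME}{K}\left(\log\frac{P}{K}+1\right) > \beta$, which holds throughout because $K \le ME$ forces $\frac{ME}{K} \ge 1$ and, for $K < P$, $\log\frac{P}{K} + 1 \ge 1$, so the left side is bounded below by $1$ — and the constants can be normalized (or the regime $K \ll P$ invoked) so that $\beta \le 1$. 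Hence $T$ strictly decreases, and pushing $K$ as large as the constraints allow is optimal, giving exactly $K_{opt} = \min\{P, ME, K_{\max}\}$.

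Finally I would note the boundary bookkeeping: the three quantities in the minimum come from three distinct feasibility requirements — $K \le P$ because a partition needs at least one processor, $K \le ME$ from Assumption \ref{ass:ME}, and $K \le K_{\max}$ from Constraint \ref{const2} (memory for the local DOFs plus one hierarchy) — and Assumption \ref{ass2} is precisely the statement that this feasible set is nonempty (it contains $K=1$). The main obstacle is not the optimization, which is a one-variable monotonicity argument, but justifying the \emph{form} of the cost model $T(K)$: one has to argue that stages of communication genuinely decompose additively into a spatial-solve term scaling like $\frac{ME}{K}\log\frac{P}{K}$ and an angle/energy-reduction term scaling like $\log K$, that the $\mathcal{O}(\log P)$ pAIR hypothesis transfers to $\mathcal{O}(\log(P/K))$ on each sub-partition, and that lower-order terms (setup, boundary exchange within a sweep front) do not change the monotonicity. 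Once the model is accepted, the theorem follows from the monotonicity of $T$ and the identification of the constraint-imposed upper bound.
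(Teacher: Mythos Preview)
Your proposal is essentially correct and matches the paper's strategy: both set up the same communication model
\[
S_{AMG}(K)=\xi\,\frac{ME}{K}\log_2\!\frac{P}{K}+\zeta\,\log_2 K
\]
(your $T(K)$ with $\alpha=\xi$, $\beta=\zeta$), differentiate, and argue that the derivative is negative on $[1,\min\{P,ME\}]$, so the optimum is the largest admissible $K$.

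The difference is in how monotonicity is established. The paper computes the unique root $K_0$ of $S_{AMG}'$ explicitly via the Lambert $W$-function (Lemma~\ref{lem:W}) and then proves $K_0\ge\min\{P,ME\}$ by case analysis on $P\lessgtr ME$. Your route is more direct: from $K\le ME$ and $K\le P$ you get $\frac{ME}{K}\ge 1$ and $1+\ln(P/K)\ge 1$, so $\xi ME\bigl(1+\ln(P/K)\bigr)\ge \xi ME\ge \zeta K$, which is exactly $\mathcal G_0(K)\le 0$ in the paper's notation. This avoids the Lambert $W$ machinery entirely and is a cleaner argument for the theorem as stated; the paper's approach buys the extra information of the exact critical point, which is not needed here but feeds into the $\alpha>1$ conjecture.

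Two small corrections. First, your displayed condition ``$\frac{ME}{K}(\log(P/K)+1)>\beta$'' drops the factor $\alpha$; the correct inequality is $\alpha\,\frac{ME}{K}(\log(P/K)+1)>\beta$. Second, ``normalize so $\beta\le 1$'' is not quite right---the ratio $\beta/\alpha$ is fixed by the model, and what you actually need (and what the paper also uses, implicitly, via ``we expect $\zeta\ll\xi$'') is $\beta\le\alpha$, i.e.\ that a reduction stage is no more expensive than a pAIR stage. State that assumption explicitly and your argument goes through.
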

\begin{conjecture}\label{conj:optK}
	Assume Constraint \ref{const2} and Assumption \ref{ass:ME} hold and that pAIR scales like $\mathcal{O}(\log P)^\alpha$,
	for $\alpha\in(1,2]$. Then the optimal $K$ with respect to minimizing the number of communication stages is given by
	\begin{align*}
	K_{opt} = \min\left\{ P, ME, \frac{PC_P- NC_TME}{C_aNC_T}\right\}.
	\end{align*}
\end{conjecture}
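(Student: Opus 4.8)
Since the conclusion of the conjecture coincides verbatim with that of Theorem~\ref{th:optK}, the plan is to reuse the structure of that proof and isolate exactly where the weaker scaling hypothesis $\mathcal{O}(\log P)^{\alpha}$, $\alpha\in(1,2]$, forces a change. First I would write the number of communication stages of one full sweep-plus-scatter step as a function of the number of spatial copies $K$. With $P/K$ processors per partition, $M_1E_1=ME/K$ angle/energy DOFs per partition, a fixed number $s$ of source iterations, and pAIR costing $c_1\bigl(\log(P/K)\bigr)^{\alpha}$ stages per single-angle solve, the per-step cost is
\[
T(K) \;=\; s\!\left[\,\frac{ME}{K}\,c_1\bigl(\log(P/K)\bigr)^{\alpha} \;+\; c_2\log K\,\right],
\]
where $c_2\log K$ accounts for the tree-based reduction that accumulates the scalar flux and scattering source across the $K$ partitions (the term the paper describes as small relative to the spatial solves; if it is simply dropped, $T$ is strictly decreasing on $[1,P)$ for every $\alpha>0$ and the result is immediate, so the honest model must retain it). By Constraint~\ref{const2} and Assumption~\ref{ass:ME} the feasible set is the integer interval $[\,1,\,K_{\max}\,]$ with $K_{\max}=\min\{P,\,ME,\,(PC_P-NC_TME)/(C_aNC_T)\}$, and the claim is exactly that $T$ attains its minimum over this set at $K=K_{\max}$.

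Second, I would relax $K$ to a real variable on $[1,K_{\max}]$ and study $T'(K)=-\,s c_1 ME\,K^{-2}(\log(P/K))^{\alpha-1}\bigl[\log(P/K)+\alpha\bigr]+s c_2 K^{-1}$. The spatial contribution is a product of two nonnegative nonincreasing functions of $K$, hence strictly negative on $[1,P)$; the only competition is the reduction term. For $\alpha=1$ (Theorem~\ref{th:optK}) the bracket is $\log(P/K)+1$ and the sign of $T'$ is governed by the elementary comparison of $c_1 ME(\log(P/K)+1)$ with $c_2K$, which holds on the feasible range because $K\le ME$; one concludes monotone decrease, reads off $K_{opt}=K_{\max}$, and finishes with the standard rounding to integer $K$. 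The plan for $\alpha\in(1,2]$ is the same comparison, now carrying the extra factor $(\log(P/K))^{\alpha-1}$: bound it below on the feasible range, combine with $K\le ME$ and $K\le K_{\max}$, and show $T'(K)\le 0$; or, if that fails near the right endpoint, show that any interior stationary point of $T$ lies at $K>K_{\max}$, so that the constrained minimum is still $K_{\max}$.

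The main obstacle is precisely this last point. For $\alpha>1$ the factor $(\log(P/K))^{\alpha-1}$ vanishes as $K\uparrow P$, so near $K=P$ the reduction term $s c_2/K$ dominates and $T'(K)>0$: $T$ is not globally monotone, and the clean $\alpha=1$ argument breaks. One must instead locate the interior critical point $K^{\star}$ solving $c_1 ME(\log(P/K))^{\alpha-1}\bigl[\log(P/K)+\alpha\bigr]=c_2 K$, argue it is unique on the feasible range (a convexity or quasi-convexity analysis of $T$ on $[1,K_{\max}]$), and prove $K^{\star}\ge K_{\max}$ under the hypotheses in force, i.e.\ that the memory bound $K_{\max}$ always bites before the reduction overhead overtakes the spatial savings. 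Because that equation is transcendental in $K$ and its root depends on the implementation-dependent constants $c_1,c_2$ and on the ratio $P/ME$, I do not expect a parameter-free closed-form argument, which is why the statement is posed as a conjecture rather than a theorem; a rigorous proof would likely require either an explicit hypothesis separating $c_1 ME$ from $c_2 P$ (mirroring the implicit regime in which Theorem~\ref{th:optK} is sharp) or a careful quasi-convexity estimate for $T$ on the whole feasible interval.
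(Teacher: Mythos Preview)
Your proposal is correct and follows essentially the same route as the paper's own partial proof: the same cost model $S_{AMG}(K)=\xi(ME/K)\log_2(P/K)^{\alpha}+\zeta\log_2 K$, the same derivative analysis, and the same identification of the key obstacle, namely that for $\alpha>1$ the factor $\log_2(P/K)^{\alpha-1}$ kills the negative term as $K\to P$, so $S_{AMG}$ is no longer monotone and the clean $\alpha=1$ argument of Theorem~\ref{th:optK} breaks. You also correctly diagnose why the statement is a conjecture: the interior root depends on the implementation constants and on $P/ME$, so no parameter-free bound places it beyond $K_{\max}$.

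Where the paper goes further than your sketch is in actually executing the convexity step you only name. Rather than trying to show $K^{\star}\ge K_{\max}$, the paper proves two quantitative facts: (i) $\partial S_{AMG}/\partial K$ at $K=P$ is bounded by $\zeta/(\ln 2\,P)<1/P$, and (ii) $\partial^{2}S_{AMG}/\partial K^{2}>0$ on $(1,P)$ for all $\alpha\in(1,2)$. The second is reduced, via the substitution $K=P/2^{\ell}$, to showing that $\mathcal{H}_{\alpha}(\ell)=\tfrac{\alpha(\alpha-1)}{\ln 2}\ell^{\alpha-2}+3\alpha\ell^{\alpha-1}+2\ln 2\,\ell^{\alpha}\ge 1$ for $\ell>0$, which is verified by a second-derivative test and an explicit critical point. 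From strict convexity of $S_{AMG}$ together with the tiny slope at $K=P$, the paper argues heuristically that the root $K_{0}$ is very close to $P$ and that taking $K=\min\{P,ME\}$ costs at most a marginal increase over the true optimum; combined with the integrality constraint on $K$, this is offered as ``strong evidence'' rather than a proof. So your plan is on the same track; to match the paper you would carry out the second-derivative computation and the $\mathcal{H}_{\alpha}$ bound rather than leave convexity as a proposed step.
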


Conceptually, these results say that minimizing the number of stages of communication is obtained by using the maximum
number of copies of the spatial mesh that fit in memory. Of course this doesn't account for more complex aspects of
performance, including computation time, changes in memory access time and message size as the local spatial problem
size changes, etc. However, it does suggest that using multiple spatial meshes should improve performance, a result
that is consistent with numerical tests in the following subsection. 

Note, Conjecture \ref{conj:optK} is not proven (or disproven) to hold for all situations, but a partial proof in the Appendix
shows that for moderate to large $P$, it is very likely to hold, and can be verified for specific values of
the above parameters.

\subsection{Parallel in Angle Results}
\label{sec:multiple_mesh_results}

\begin{figure}[!htb]
	\centering
	\includegraphics[scale=0.35]{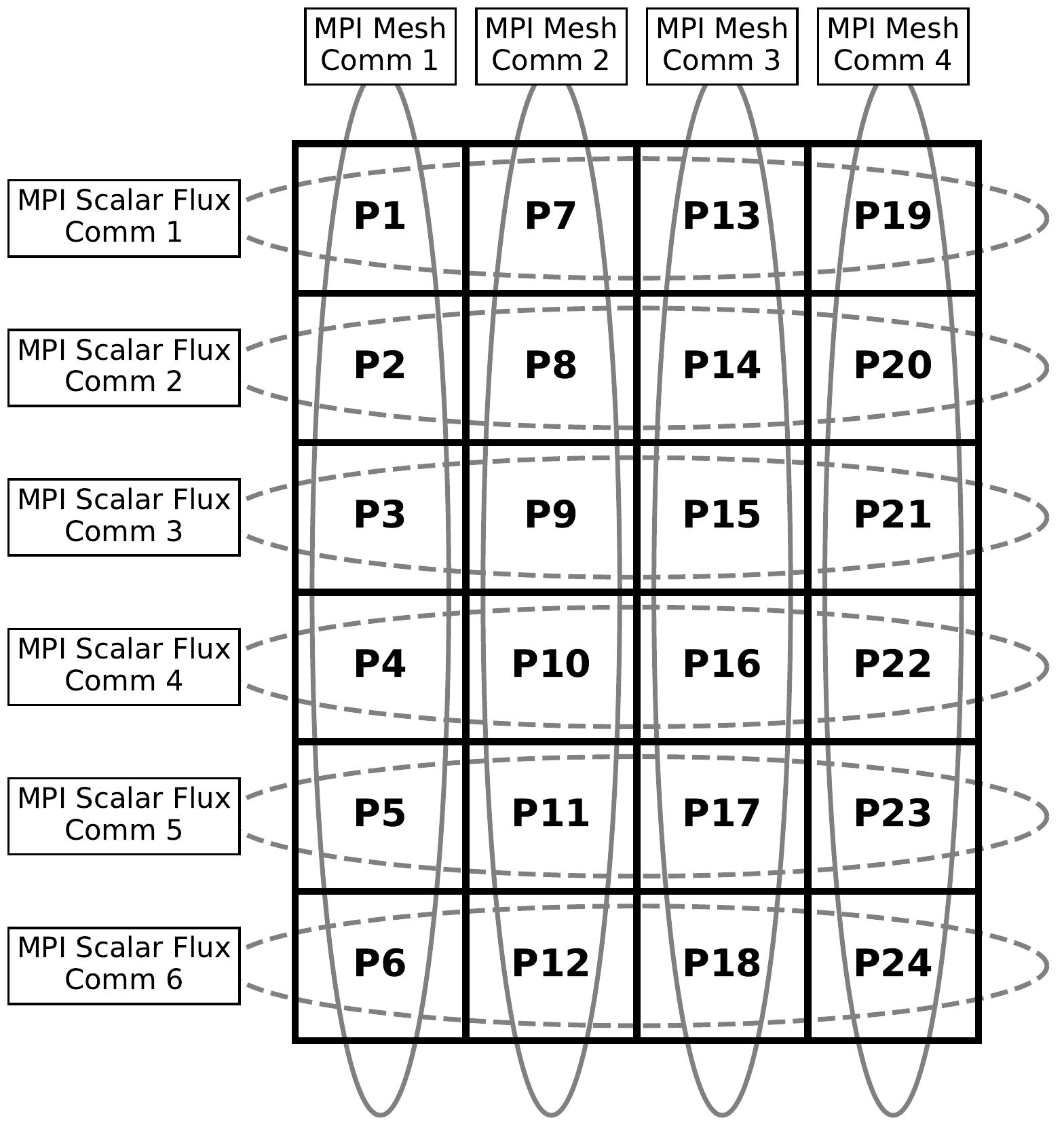}
	\caption{Schematic showing the grouping of 24 processors into a grid of $6\times 4$ MPI communicators. Columns each contain
	an identical copy of a distributed spatial mesh to solve for angular flux vectors over a subset of directions in the angular discretization.
	Rows each contain a full angular discretization over a fixed subdomain of the spatial mesh.}
	\label{fig:multi_mesh_schematic_1}
\end{figure}

Figure \ref{fig:multi_mesh_schematic_1} provides an example of the parallel in angle implementation
on 24 processors, where it is assumed there is sufficient memory to store the spatial mesh and associated variables on
six processors. Six MPI groups and associated intra-communicators are thus created and single direction solves are
performed on the processors in these groups, each group accounting for the angular flux directions associated with 1/4 of the $M$
directions in the S$_N$ discretization. To form the scalar flux as a weighted sum over angular flux, these intra-communicators,
called MPI Mesh communicators in Figure \ref{fig:multi_mesh_schematic_1}, must then perform an Allreduce communication,
where the contribution from each Mesh communicator to the scalar flux is summed and this sum is made available to all processors.
To perform this Allreduce, a second MPI grouping is created, denoted as the Scalar Flux communicators in Figure~\ref{fig:multi_mesh_schematic_1}.
Note the spatial mesh must be distributed in exactly the same way based on MPI rank in each MPI Mesh communicator. This is easily
accomplished with the software libraries used here.

The weak scaling tests using a three-dimensional structured mesh described in Section~\ref{sec:weak_scaling_parallel_in_space} are now repeated for the parallel in
space and angle implementation. Three different parallel in angle simulations are performed using two, four, and eight
replicated meshes respectively, with an on-processor solve for relaxation. 
Results from four specific configurations are presented in this section, two using $\ell$AIR and two using nAIR, and
with two values of filtering for $R$. 
These specific cases are presented because there is significant difference between the D-2 $\ell$AIR algorithm and the degree-1
nAIR algorithm.
Additionally, with heavy operator filtering, stencil growth is minimized along with the overall communication costs.
Despite this, significant speed up is seen at 4096 processors with more mesh
replication, particularly in setup time, with reductions in total time per sweep (including pAIR setup and solve for all angles) 
reduced by 30-50\% compared with no parallelism in angle. The pAIR setup times are shown in Figure
\ref{fig:scaling_multi_mesh_setup_timing}.

\begin{figure}[!htb]
	\centering
	\includegraphics[width=\textwidth]{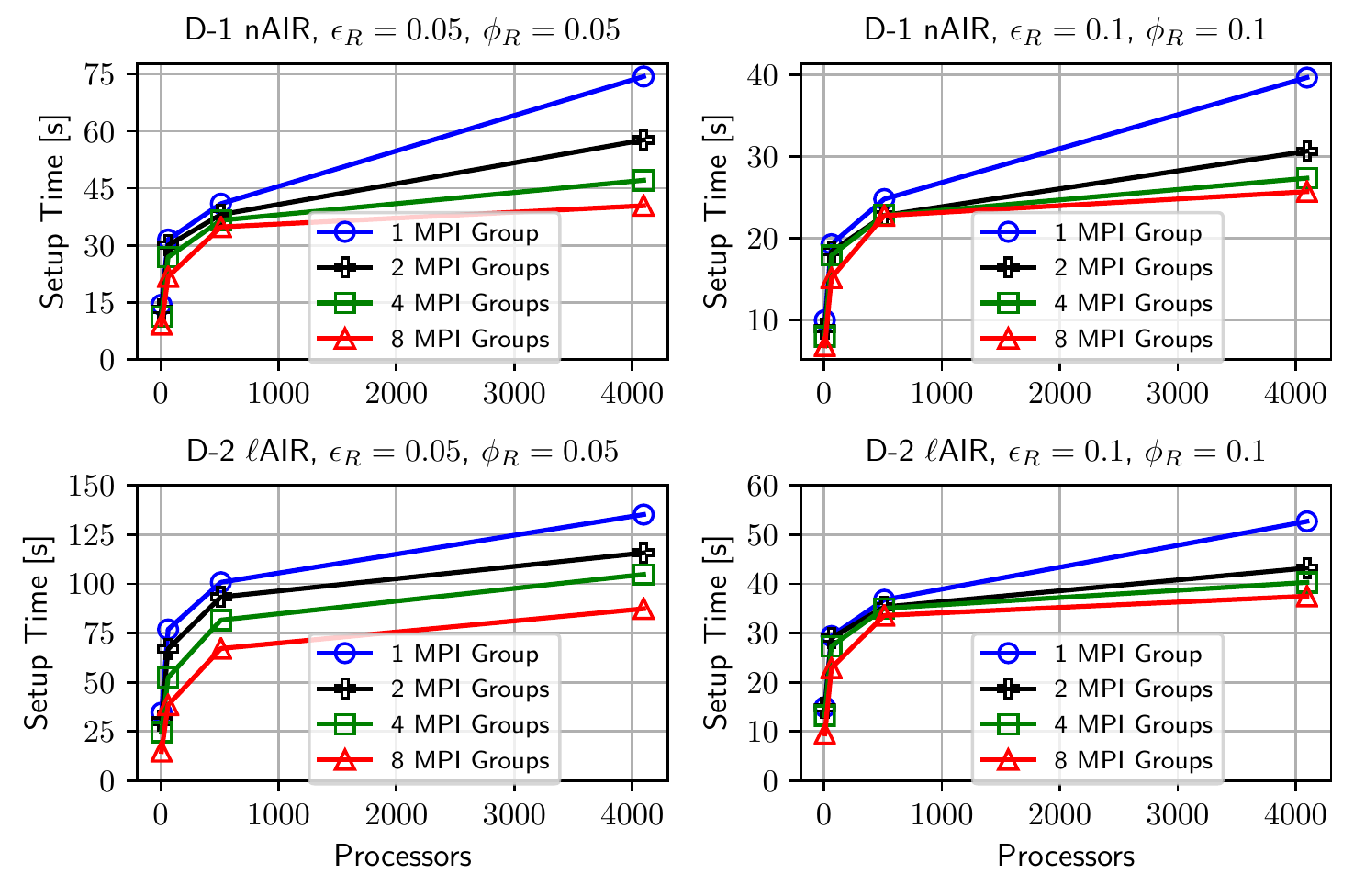}
	\caption{AMG setup time and solve time shown for weak scaling with different numbers of replicated meshes. Results are shown for both $\ell$AIR and nAIR. $\phi_R$ is the level of filtering on $R$}
	\label{fig:scaling_multi_mesh_setup_timing}
\end{figure}

The solve time and average convergence factor, for D-1 nAIR are shown in Figure
\ref{fig:scaling_multi_mesh_avg_convergence_factors_solve_time}. Results are only shown for nAIR because
the solve time and CF of D1 nAIR and D-2 $\ell$AIR are almost identical (in fact, the preconditioners are almost
identical when applied to linear transport on a structured grid \cite{AIR2}). Here, the linear systems are solved to a
relative tolerance of 1e-6 instead of to machine precision as in Section~\ref{sec:weak_scaling_parallel_in_space}.
Notice that the average convergence factors decrease significantly with multiple meshes, due to the on-processor
relaxation solving a larger part of the domain directly, but the solve times are more or less constant. In this case,
the reduced communication and faster convergence is offset by the additional computation required when a larger
portion of the spatial problem is stored on processor.

\begin{figure}[!htb]
	\centering
	\includegraphics[width=\textwidth]{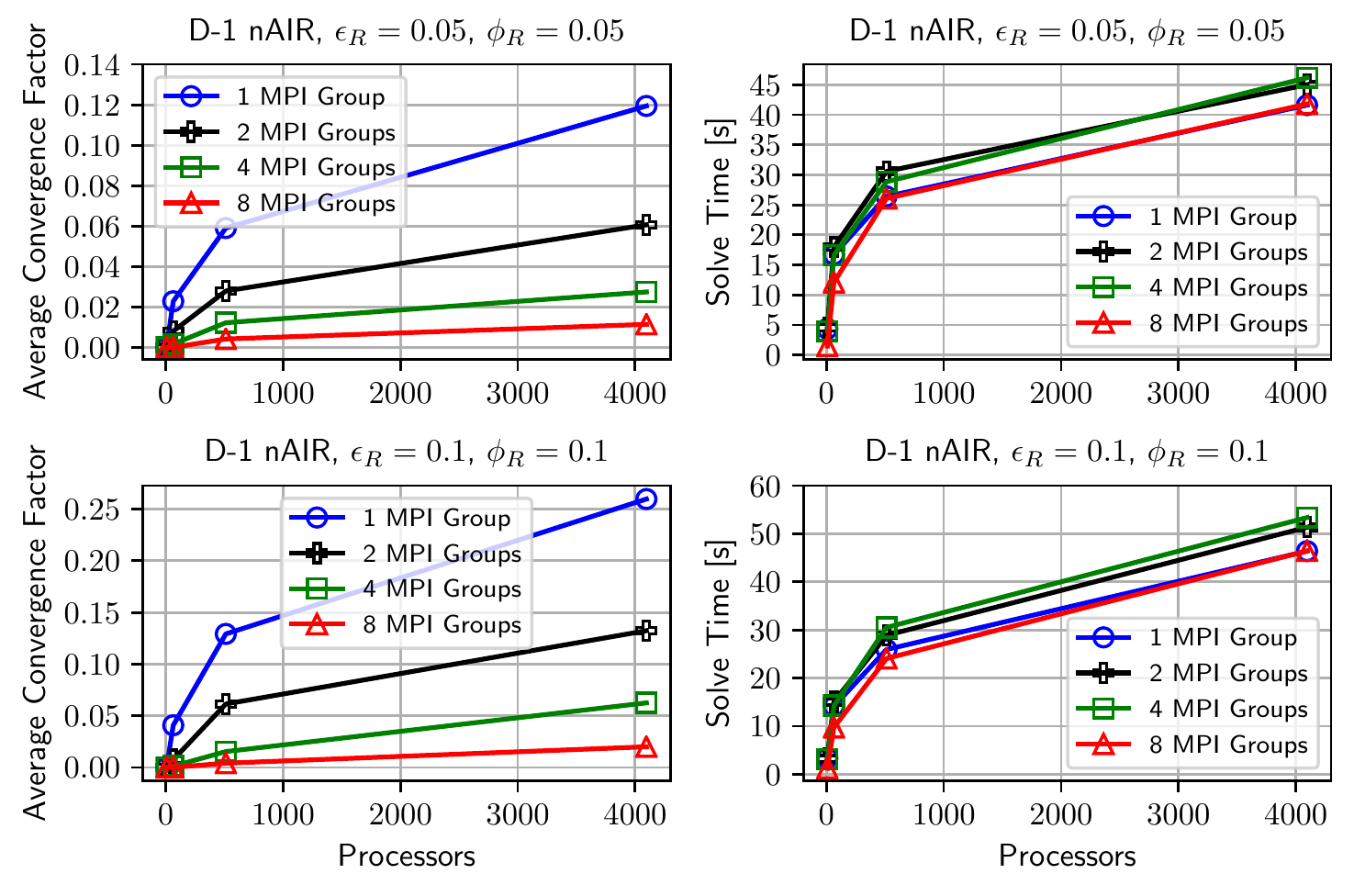}
	\caption{Average convergence factors shown for weak scaling with different numbers of replicated meshes. Results are shown for both $\ell$AIR and nAIR}
	\label{fig:scaling_multi_mesh_avg_convergence_factors_solve_time}
\end{figure}

Although these results only consider the setup and solve time for a single sweep, they extend to the general time to
solution of source iteration as well.
The only overhead of parallel in angle vs. only spatial parallelism is additional communication to compute the scalar
flux. However, communicating and computing the scalar flux is very cheap compared to angular flux
calculations with pAIR. For example, with 8 MPI groups the computation and communication to form the scalar flux
takes less than 1/10th of a second on 4096 processors, an insignificant amount of time compared to the reduction
in time obtained through mesh replication. Thus, if multiple meshes reduce the total time for a single iteration by 30\%,
then total time of $n$ source iterations will also decrease by $\approx 30\%$.

\section{Representative Parallel Sweep Results}
\label{sec:comp_with_sweeps}
As discussed in Section~\ref{sec:sn_by_parallel_sweeps}, provably optimal methods have been developed for solving the $S_N$ equations discretized on a structured mesh using parallel sweeps.
These methods have been implemented in the deterministic transport program PDT \cite{adams1,Tanase:2011:SPC:2038037.1941586}.
Figure~\ref{fig:pdt_scaling} shows the time required by PDT to complete a full sweep of all 32 directions used in the three-dimensional problem with a uniform mesh described in Section~\ref{sec:weak_scaling_parallel_in_space}. 
Results are presented for a hybrid KBA partitioning as well as a volumetric partitioning.
The volumetric partitioning has better parallel scalability than the hybrid KBA, but there is a larger constant associated with the scaling law \cite{Pautz_sweep_unstructured_mesh_decomp} and so hybrid KBA is less costly for the problem investigated here. 

When presenting scaling results, it is common and sensible to discuss relative times; however, actual solution times are presented in this section.
The purpose of these results is not to investigate the scalability of parallel sweeps, as has been done in other works including \cite{Validation_of_parallel_sweeps}, but to provide a representative result for the cost of parallel sweeps for comparison.
The times presented depend on many factors besides the parallel algorithm including details of code implementation and computer characteristics.

\begin{figure}[!htb]
	\centering
	\includegraphics{./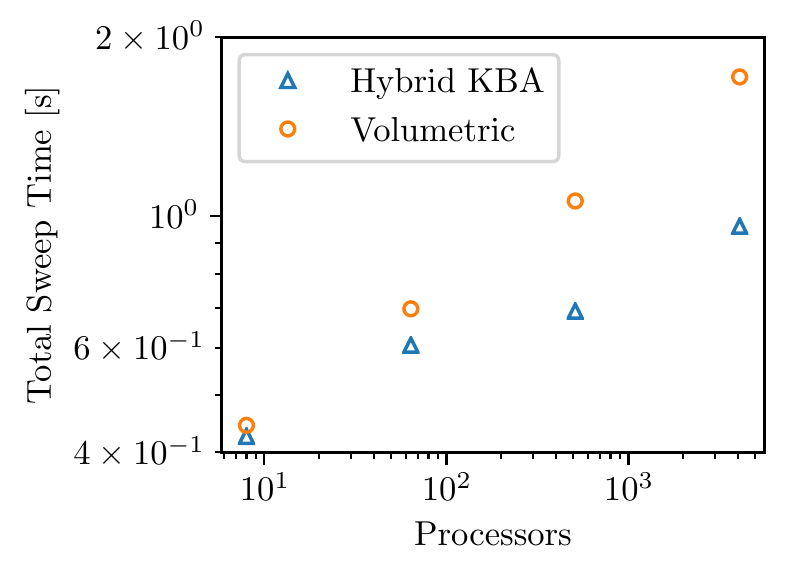}
	\caption{Time taken for a single parallel sweep of 32 directions with PDT. Results are shown for both a hybrid KBA and volumetric cell partitioning}   
	\label{fig:pdt_scaling}
\end{figure}

As discussed in Sections~\ref{sec:sn_by_parallel_sweeps} and \ref{sec:sweeping_with_pAIR} parallel sweeps have poorer
theoretical scaling compared to AMG, which has a logarithmic dependence on $P$ rather than polynomial; however, these scaling laws say
nothing about the constants associated with the growth in solution time.
The constant associated with the AMG weak scaling law is generally much larger than that associated with parallel sweeps for a structured mesh.
For the three-dimensional cube investigated here with a structured mesh, fitting a constant $CP^{1/3}$ to the data in Figure \ref{fig:pdt_scaling}
and extrapolating to large $P$ predicts a crossover point where pAIR would out perform parallel sweeps on the order of 100 million processors.

Despite the large crossover point, it is worth pointing out that PDT is highly optimized and designed for uniform structured grids.
These results will not be generally applicable to unstructured meshes. It is challenging to make a general statement regarding the efficiency
of parallel sweeps on unstructured meshes. For example, a recent work has shown good scaling for parallel sweeps out to millions of
MPI processes when there is some coarse regularity to an unstructured mesh such that it can be decomposed into well balanced
regular partitions \cite{2019arXiv190602950A}. However, to our knowledge, no comparably scalable method has been demonstrated
for a general unstructured mesh, even in the case where the mesh has no cycles. Furthermore, cycles in a mesh (in which case
the resulting matrix is not triangular) require determining some ordering in which to sweep. In \cite{haut2019efficient}, a cycle-breaking
strategy is developed for highly curved meshes. That strategy provides an important framework in which to sweep on very unstructured
or curvilinear meshes, but source iteration was shown to converge in up to $3\times$ less iterations when inverting the transport 
equations exactly with pAIR compared with sweeping and cycle breaking. 

Conversely, pAIR is robust on arbitrary unstructured and curvilinear meshes, but optimization of pAIR is an ongoing process.
Preliminary data suggests setup time of pAIR can be reduced significantly, and
recent work on parallel sparse matrix operations \cite{bienz2019reducing,bienz2019node} shows that improved communication
algorithms can speed up the setup and solve phase by several times. Section \ref{sec:accuracy} indicates that pAIR solves do
not need to be performed to a high accuracy. If we consider time-dependent transport (easier to solve than steady state),
only a few pAIR iterations should be necessary, which will further reduce pAIR solve time by several times. Combining 
superior performance on unstructured or curvilinear meshes, a reduction in iterations for the time-dependent setting, and
the potential reduction in total pAIR time to solution by several times, a crossover point may be possible in the millions
of processors for some problems.

\section{Conclusion}
\label{sec:conclusion}

In this paper, pAIR is shown to be effective and scalable for solving the source iteration equations of the S$_N$ approximation to
the transport equation. Section~\ref{sec:perf:mesh} demonstrates pAIR is capable of excellent convergence factors for a variety of
meshes and material configurations, and scaling tests in Section~\ref{sec:weak_scaling_parallel_in_space} show pAIR as currently
implemented is capable of nearly ideal multigrid scaling.

When memory allows for some number of replicated meshes, Section \ref{sec:multiple_meshes} shows that a parallel in angle
with mesh replication implementation can significantly increase performance. In a multiphysics code, using this multiple mesh
scheme globally may not be possible. However in this situation, if there is excess memory and a relatively fast mesh partitioning
algorithm is available, a separate replicated mesh partitioning might be constructed and used for the transport solve. 

Because pAIR is an algebraic solver implemented in a popular linear algebra library, it can be easily interfaced as a black-box
solver to other programs. As demonstrated here, a program capable of solving transport was created with a popular FEM library.
In particular, no changes or special treatment were required when using the p4est domain decomposition library to distribute and
refine the mesh, and AIR has proven effective on arbitrary unstructured and curvilinear meshes. This is not generally possible
to do with parallel sweeps, which even on structured grids, require specialized communication scheduling logic that is
integral with domain partitioning. Although the superior asymptotic scaling of pAIR over traditional sweeps appears insufficient 
for pAIR to overtake sweeps in performance, at least on meshes with moderate structure, the generality and non-intrusiveness
of pAIR as a solver offers significant other advantages. 

\section*{Appendix}
\label{appendix}

The MMS source, $q_{mms}$, for error tests in Section \ref{sec:accuracy} is shown below,
where $\mu$, $\eta$, and $\zeta$ are the direction cosines and L is the length the cube side.

\begin{align*}
q_{mms}&=\langle \mu, \eta, \zeta \rangle \cdot \langle \frac{\partial \left( \Phi \Psi \right)}{\partial x}, \frac{\partial \left( \Phi \Psi \right)}{\partial y}, \frac{\partial \left( \Phi \Psi \right)}{\partial z} \rangle - \frac{ \sigma_s}{4\pi} \Phi +\sigma_t \Phi \Psi \\
\Phi&=\sin\left( \frac{x\pi}{L} \right) \sin\left( \frac{y\pi}{L}  \right) \sin\left( \frac{z\pi}{L}  \right) \\
\Psi&=\left(1+2\mu \left( x-\frac{L}{2} \right)  \right)  \left(1+2\eta \left( y-\frac{L}{2} \right)  \right) \left(1+2\zeta \left( z-\frac{L}{2} \right)  \right)
\end{align*}

\begin{proof}[Proof of Theorem \ref{th:optK}]
The proof consists of deriving a function $S_{AMG}(K)$ that gives the total number of communication
stages as a function of $K$, and proceeding to show that $\tfrac{\partial S_{AMG}}{\partial K} < 0$ for
$K \leq \min\{P, ME\}$. Thus, increasing $K$ to the minimum of these values is guaranteed to reduce
$S_{AMG}$. By Assumption \ref{ass:ME} and the fact that we cannot have more partitions than
processors $P$, we cannot pick $K$ larger than the minimum of these values.
If $\tfrac{PC_P- NC_TME}{C_aNC_T} < \min\{ME,P\}$, then Constraint \ref{const2} forces us to pick
$K = \tfrac{PC_P- NC_TME}{C_aNC_T}$. 

Consider a model of total communication stages when applying pAIR as a solver for the transport
sweeps. AMG has two phases, the setup and solve phase, with a total number of communication
stages given by
\begin{align}\label{eq:amg_stage}
\xi\log_2\left(\frac{P}{K}\right)^{\alpha},
\end{align}
for $\alpha \in[1,2]$. Optimal performance is given by $\alpha = 1$ (the assumption of
the theorem, but we stay general for purposes of the conjecture). 
Here, $\xi$ is a constant that reflects the cost of the setup and solve of AMG compared
with a sweep, and typically $\alpha\in[1,1.5]$. Each partition must solve $ME/K$ angles,
so \eqref{eq:amg_stage} is multiplied by $ME/K$. 

In addition to sweeps, the scalar flux must be accumulated for each energy, and the scattering
source must be computed over all energy groups. These steps are accomplished by computing
the scalar flux available on each processor and summing across partitions so that all partitions
have the scalar flux for the energies on that partition.  Let $\psi_{\ell,\nu}(\textbf{x})$ represent the
angular flux in direction $\ell$ with energy $\nu$. Let $\phi_{\nu}(\textbf{x})$ be the scalar flux with
energy $\nu$.  Let $w_\ell$ be the quadrature weight and let $\sigma_{\nu,\nu'}$ be the scattering
cross-section from energy group $\nu'$ to energy group $\nu$. The accumulation of the scalar
flux has the form
\begin{equation*}
\phi_\nu ({\textbf{x}}) = \sum_1^{K_A} \left(\sum_{\ell= 1}^{M_1} w_\ell \psi_{\ell,\nu}(\textbf{x}) \right),
\end{equation*}
where the second sum represents summation on each processor and the first sum is across the $K$ 
partitions. Likewise, accumulation of the scattering term has the form
\begin{equation*}
Q_{\nu} ({\textbf{x}})= \sum_1^{K_E} \left( \sum_{\nu'=1}^{E_1} \sigma_{(\nu,\nu')} \phi_{\nu'}({\textbf{x}})\right).
\end{equation*}

This requires $S_A = \log_2(K_A)$ stages to compute all the scalar fluxes for each energy and 
$S_E = \log_2(K_E)$ stages to compute the scattering kernel.
Combining with the communication for pAIR leads to a model of the form
\begin{align*}
S_{AMG} & =  \xi\log_2\left(\frac{P}{K}\right)\frac{ME}{K} + \zeta\log_2(K_E) + \zeta\log_2(K_A) \\
& = \xi\log_2\left(\frac{P}{K}\right)\frac{ME}{K} + \zeta\log_2(K).
\end{align*}
Here, a constant $\zeta$ is introduced to acknowledge that the summation stages require less
communication than the pAIR stages. Thus, we expect that $\zeta \ll \xi$.\footnote{This could
also be normalized, but we prefer to explicitly include some leading constant for each.}

Then,
\begin{align}
\frac{\partial S_{AMG}}{\partial K} & = \frac{\zeta}{\ln(2)K} - \frac{\xi ME }{\ln(2)K^2}
	- \frac{\xi ME \log_2\left(\frac{P}{K}\right)}{K^2} \nonumber\\
& = \frac{1}{K^2} \left[ \frac{K\zeta}{\ln(2)} - \xi ME \left(\frac{1}{\ln(2)} +
	\log_2\left(\frac{P}{K}\right)\right)\right].\label{eq:dsamg}
\end{align}
Note that multiplying $\tfrac{\partial S_{AMG}}{\partial K}$ by $K^2$ does not change the
sign of the slope or the zeros, so consider the interior term
\begin{align}
\mathcal{G}_0(K) & := \frac{\zeta}{\ln(2)} K - \frac{\xi ME}{\ln(2)}\left(1+\ln\left(\frac{P}{K}\right)\right), \label{eq:dG0dk} \\
\frac{\partial \mathcal{G}_0}{\partial K} & = \frac{\zeta}{\ln(2)} + \frac{\xi ME}{\ln(2) K}.\nonumber
\end{align}
Note that $\mathcal{G}_0(1) < 0$ and $\frac{\partial \mathcal{G}_0}{\partial K} > 0$ for $K > 0$. Thus there exists
exactly one zero of $\frac{\partial S_{AMG}}{\partial K}$ for $K > 1$. A closed form in terms of special functions
is given in the following lemma. 

\begin{lemma}\label{lem:W}
For $\alpha = 1$, the single zero of $\frac{\partial S_{AMG}}{\partial K}$ \eqref{eq:dsamg} is given by
\begin{align*}
K_0 & = \frac{\xi ME}{\zeta} W\left(\frac{e\zeta P}{\xi ME}\right),
\end{align*}
where $W(\cdot)$ denotes the Lambert $W$-function.
\end{lemma}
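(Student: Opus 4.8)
The plan is to solve the critical-point equation $\mathcal{G}_0(K)=0$ in closed form and to recognize it as a Lambert-$W$ equation; existence and uniqueness of this zero for $K>1$ have already been established in the text (from $\mathcal{G}_0(1)<0$ together with $\partial\mathcal{G}_0/\partial K>0$ for $K>0$), and, as noted there, the zeros of $\partial S_{AMG}/\partial K$ coincide with those of $\mathcal{G}_0$. So it suffices to identify the root of $\mathcal{G}_0$ explicitly. Starting from the definition \eqref{eq:dG0dk}, setting $\mathcal{G}_0(K)=0$ and clearing the common factor $1/\ln(2)$ gives $\zeta K = \xi ME\bigl(1+\ln(P/K)\bigr)$, which I would rewrite as $\zeta K = \xi ME\bigl(1+\ln P-\ln K\bigr)$.

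Next I would nondimensionalize via the substitution $u:=\zeta K/(\xi ME)$, so that $K=(\xi ME/\zeta)\,u$ and $\ln K=\ln u+\ln(\xi ME/\zeta)$. Substituting transforms the equation into $u+\ln u = 1+\ln\bigl(\zeta P/(\xi ME)\bigr)=\ln\bigl(e\zeta P/(\xi ME)\bigr)$, and exponentiating both sides yields $u\,e^{u}=e\zeta P/(\xi ME)$, which is precisely the defining relation of the Lambert $W$-function.

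Since $\xi,\zeta,M,E,P>0$, the argument $e\zeta P/(\xi ME)$ is positive and hence lies in the domain on which the principal branch $W$ is defined and single-valued; the unique solution is therefore $u=W\bigl(e\zeta P/(\xi ME)\bigr)$, which is consistent with the uniqueness already proven. Unwinding the substitution gives $K_0=(\xi ME/\zeta)\,W\bigl(e\zeta P/(\xi ME)\bigr)$, as claimed.

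The computation is essentially mechanical, and the only place requiring a little care---the closest thing to an obstacle---is the bookkeeping in the substitution step: one must track the $\ln(\xi ME/\zeta)$ term accurately so that the additive constant $1$ from the original equation combines with it to produce exactly the factor $e$ inside the $W$-argument, and one should confirm that the root extracted via the principal branch is the same unique zero $K_0>1$ identified in the excerpt rather than a spurious one.
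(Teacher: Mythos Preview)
Your proof is correct and is essentially the same as the paper's: both reduce the critical-point equation $\zeta K=\xi ME\bigl(1+\ln(P/K)\bigr)$ to the defining relation of the Lambert $W$-function. The only difference is presentational---you derive the root by a change of variables $u=\zeta K/(\xi ME)$ leading to $u e^{u}=e\zeta P/(\xi ME)$, whereas the paper instead verifies the claimed formula directly using the identity $\ln W(x)=\ln x - W(x)$; these are the forward and backward directions of the same computation.
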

\begin{proof}
The proof proceeds by construction. Recall the identity of the Lambert $W$-function that $\ln(W(x)) = \ln(x) - W(x)$
\cite{corless1996lambertw}. Let $C = \tfrac{\xi ME}{\zeta}$ and observe
\begin{align*}
C + C\ln\left(\tfrac{P}{K_0}\right)  & = C+  C\ln(P) - C\ln(K_0) \\
& = C + C\ln(P) - C\ln\left(CW\left(\tfrac{eP}{C}\right)\right) \\
& = C + C\ln(P) - C\ln(C) - C\ln\left(\tfrac{eP}{C}\right) + CW\left(\tfrac{eP}{C}\right) \\
& = CW\left(\tfrac{eP}{C}\right) \\
& = K_0.
\end{align*}
Thus $K_0$ satisfies the relation 
\begin{align*}
K_0 & = \frac{\xi ME}{\zeta}\left(1 + \ln\left(\frac{P}{K}\right)\right),
\end{align*}
Appealing to \eqref{eq:dG0dk} completes the proof.
\end{proof}

We now have everything we need to prove the final result.
Suppose $P < ME$ and let $K = P$. Then from \eqref{eq:dG0dk},
\begin{align*}
G_0(P) = \frac{\zeta P}{\ln(2)} - \frac{\xi ME}{\ln(2)} < 0.
\end{align*}
Because $G_0(1) < 0$ and $\tfrac{\partial \mathcal{G}_0}{\partial K} > 0$, this implies that
$\tfrac{\partial S_{AMG}}{\partial K} < 0$ for $K \in[1,\min\{P,ME\}] = [1,P]$.

Now suppose $P \geq ME$. Appealing to Lemma \ref{lem:W}, it is sufficient to show that the root $K_0$ of
$\tfrac{\partial S_{AMG}}{\partial K}$ satisfies $K_0 > \min\{P, ME\} = ME$, which implies
$\tfrac{\partial S_{AMG}}{\partial K} < 0$ for $K \in[1,\min\{P,ME\}] = [1,ME]$. Recall that for real, positive
values, $W(x)$ is monotonically increasing \cite{corless1996lambertw}, as well as the definition of the
Lambert $W$-function given by $x = f^{-1}(xe^x):=W(x)$. Then observe,
\begin{align*}
P\geq ME \hspace{3ex}\implies\hspace{10.5ex} \frac{P}{ME} & \geq e^{\tfrac{\zeta}{\xi} - 1},  \\
\Longleftrightarrow\hspace{10ex} \frac{e\zeta P}{\xi ME} & \geq \frac{\zeta}{\xi} e^{\tfrac{\zeta}{\xi}} , \\
\implies\hspace{3ex} W\left(\frac{e\zeta P}{\xi ME} \right) & \geq W\left(\frac{\zeta}{\xi} e^{\tfrac{\zeta}{\xi}}\right),  \\
\Longleftrightarrow\hspace{3.5ex} W\left(\frac{e\zeta P}{\xi ME} \right) & \geq \frac{\zeta}{\xi},  \\
\implies \hspace{4.25ex} K_0 \geq ME.
\end{align*}

\end{proof}

\begin{proof}[Partial proof of Conjecture \ref{conj:optK}]
For $\alpha\in(1,2]$, $S_{AMG}$ take the more general form
\begin{align*}
S_{AMG} & = \xi\log_2\left(\frac{P}{K}\right)^{\alpha}\frac{ME}{K} + \zeta\log_2(K), \\
\frac{\partial S_{AMG}}{\partial K} & = \frac{1}{K^2} \left[ \frac{K\zeta}{\ln(2)} -
	\xi ME \left(\frac{\alpha\log_2\left(\frac{P}{K}\right)^{\alpha-1}}{\ln(2)} +
	\log_2\left(\frac{P}{K}\right)^\alpha\right)\right].\label{eq:dsamg}
\end{align*}
As before, we multiply $\tfrac{\partial S_{AMG}}{\partial K}$ by $K^2$ and define the functional
\begin{align*}
\mathcal{G}(K) & = \frac{\zeta}{\ln(2)} K - \xi ME \left(\frac{\alpha\log_2\left(\frac{P}{K}\right)^{\alpha-1}}{\ln(2)} +
	\log_2\left(\frac{P}{K}\right)^\alpha\right),
\end{align*}
where 
\begin{align*}
\frac{\partial \mathcal{G}}{\partial K} & = \frac{\zeta}{\ln(2)} + \frac{\xi ME \alpha}{\ln(2)K} \left(\frac{(\alpha-1)\log_2\left(\frac{P}{K}\right)^{\alpha-2}}{\ln(2)} +
	\log_2\left(\frac{P}{K}\right)^{\alpha-1}\right).
\end{align*}
We assume that $K \in[1,P]$, in which case $P/K \geq 1$ and $\log_2\left(\tfrac{P}{K}\right) \geq 0$.
Because the remaining constants, $\xi, \zeta, M$, etc., are positive, $\tfrac{\partial \mathcal{G}}{\partial K} > 0$ for $K \in[1,P]$.
Suppose $P \geq 2$. Then it is clear for $K = 1$, $\tfrac{\partial S_{AMG}}{\partial K} < 0$. Conversely, if we let $K = P$,
the log-terms vanish and $\tfrac{\partial S_{AMG}}{\partial K} > 0$. Because $\tfrac{\partial \mathcal{G}}{\partial K} > 0$ for
$K \in[1,P]$, there exists exactly one zero of $\tfrac{\partial S_{AMG}}{\partial K}$ over the interval $K\in(1,P)$.

Determining the root of $\tfrac{\partial S_{AMG}}{\partial K}$ without explicit constants is not trivial.
However, we can note that 
\begin{align*}
\frac{\partial S_{AMG}}{\partial K}[P] = \frac{\zeta}{\ln(2)P} < \frac{1}{P}.
\end{align*}
Moreover, note that
\begin{align*}
\frac{\partial^2S_{AMG}}{\partial K^2} & = \frac{\mathcal{G}'(K)}{K^2} - \frac{2\mathcal{G}(K)}{K^3}.
\end{align*}
We will now show that if $\alpha\in(1,2)$, $\tfrac{\partial^2S_{AMG}}{\partial K^2} > 0$ over the range $(1,P)$, implying
that $S_{AMG}$ is concave up. The purpose of this is that if $\frac{\partial S_{AMG}}{\partial K}[P] < \frac{1}{P}$,
and the slope of $S_{AMG}$ is strictly increasing, then for moderate to large $P$,
\begin{enumerate}\itemsep0em 
\item the root $K_0$ of $\tfrac{\partial S_{AMG}}{\partial K}$ is likely quite close to $K = P$, and
\item choosing $K = \min\{P, ME\}$ results in at most a marginal increase in $S_{AMG}$ over the optimal, $K_{opt}$. 
\end{enumerate}

Showing $\tfrac{\partial^2S_{AMG}}{\partial K^2} > 0$ is equivalent to showing $K^3\mathcal{G}'(K) > K^2\mathcal{G}(K)$. 
Plugging in and expanding, this is equivalent to proving
\begin{align*}
\frac{\zeta K}{\xi} & \leq ME \left[ \frac{\alpha(\alpha-1)}{\ln(2)}\log_2\left(\frac{P}{K}\right)^{\alpha-2} + 3\alpha
	\log_2\left(\frac{P}{K}\right)^{\alpha-1} + 2\ln(2)\log_2\left(\frac{P}{K}\right)^{\alpha}\right].
\end{align*}
Due to the constraints that $K \leq \min\{P, ME\}$ and the fact that $\zeta < \xi$, it is sufficient to prove
the interior term on the right is $\geq 1$. Let us make the change of variables $K = P/2^\ell$, for $\ell > 0$,
encompassing the range $K \in (0,P)$. Then we need to show
\begin{align*}
\mathcal{H}_\alpha(\ell) := \frac{\alpha(\alpha-1)}{\ln(2)}\ell^{\alpha-2} + 3\alpha\ell^{\alpha-1} + 2\ln(2)\ell^\alpha \geq 1.
\end{align*}
for $\ell > 0$ and $\alpha \in (1,2)$. A second-derivative test confirms that $\mathcal{H}_\alpha(\ell)$ is concave up
for $\alpha\in(1,2)$ and further algebra confirms a single critical point of $\mathcal{H}_\alpha$ for $\alpha\in(1,2)$ at 
\begin{align*}
\ell_0 = \frac{3 - 3\alpha + \sqrt{\alpha^2+6\alpha-7}}{4 \ln(2)},
\end{align*}
where the minimum of $\mathcal{H}_\alpha(\ell)$ occurs at $\ell_0 \in(0,0.248)$ for $\alpha \in(1,2)$. Plugging $\ell_0$ into
$\mathcal{H}_\alpha(\ell)$ and plotting as a function of $\alpha\in(1,2)$ confirms that 
$\mathcal{H}_\alpha(\ell) \geq \mathcal{H}_\alpha(\ell_0) > 1$.

Due to the additional practical constraint that $K$ must be integer valued and divide the number of processors,
there is strong evidence that for $\alpha \in (1,2)$,
\begin{align*}
K_{opt} = \min\left\{ P, ME, \frac{PC_P- NC_TME}{C_aNC_T}\right\}.
\end{align*}
\end{proof}

\section*{Acknowledgments}
This material is based upon work supported by the Department of Energy, National Nuclear Security Administration, under Award Number DE-NA0002376.

\bibliographystyle{ans_js}                                                                           \bibliography{bibliography}

\end{document}